\def\C{\mathbb C}
\def\Tr{\mathop {\rm Tr}}
\def\be{\begin{equation}}
\def\beq{\begin{eqnarray}}
\def\ee{\end{equation}}
\def\eeq{\end{eqnarray}}
\def\eqref#1{(\ref{#1})}
\def\lra#1{\langle #1 \rangle}
\def\lrp#1{\left( #1 \right)}
\def\abs#1{\left\vert #1\right\vert}
\def\nn{\nonumber\\}
\def\be{\begin{equation}}
\def\beq{\begin{eqnarray}}
\def\beqs{\begin{eqnarray*}}
\def\ee{\end{equation}}
\def\eeq{\end{eqnarray}}
\def\eeqs{\end{eqnarray*}}
\def\bluedoi#1{{\color{blue} \doi{#1}}}
\newtheorem{thm}{Theorem}
\newtheorem{cor}[thm]{Corollary}
\newtheorem{lem}[thm]{Lemma}
\newtheorem{prop}[thm]{Proposition}
\begin{document}
\title{Reflection Positivity for Parafermions}
\author{Arthur Jaffe\inst{1,2} \and Fabio L. Pedrocchi\inst{3,2}
}                     
\institute{Harvard University, Cambridge, Massachusetts 02138, USA\\ \email{arthur{\_}jaffe@harvard.edu} \and Department of Physics, University of Basel, Basel, Switzerland \and JARA Institute for Quantum Information, RWTH Aachen University, Aachen, Germany \\ 
\email{fabio.pedrocchi@rwth-aachen.de}}

\dedication{\centerline{Dedicated to the memory of Ursula Eva Holliger-H\"anggi.}}

%

\maketitle
\begin{abstract}
We establish reflection positivity for Gibbs trace states for a class of  gauge-invariant, reflection-invariant Hamiltonians describing parafermion interactions on a lattice. We relate these results to recent work in the condensed-matter physics literature. 
\end{abstract}

\setcounter{equation}{0}  \section{Introduction}\label{sec:Introduction}
In the early 1960's, Keijiro~Yamazaki introduced a family of algebras generalizing  a Clifford algebra.\footnote{See 1) and 2) in the middle of page 193 in \S7.5 of~\cite{Yamazaki}.} These algebras are characterized by a primitive $n^{\rm th}$ root of unity  $\omega=e^{{2\pi i}/{n}}$,  and generators $c_{j}$,  where $j=1,2,\ldots,L$, with each generator of order $n$.   Alun~Morris studied these algebras and showed that for even $L$ they have an irreducible representation on a Hilbert space $\mathcal{H}$ of dimension $N=n^{L/2}$, and this is unique up to unitary equivalence~\cite{Morris-1}.   Here we consider $L$ even and $c_{j}$ unitary.   
In the physics literature,  one calls the operators  $c_{j}$  {\em a set of parafermion generators of order~$n$} (or simply ``parafermions'') if they satisfy Yamazaki's relations:
\be\label{eq:para1}
c_{j}^{n}=I\;,
\quad\text{and}\quad
c_{j}c_{j'}=\omega \,c_{j'}c_{j}\;,
	\quad\text{for}\quad\text{$j<j'$}\;.
\ee

Consequently $c_{j}^{*}=c_{j}^{n-1}$, and also $c_{j}c_{j'}=\omega^{-1} \,c_{j'}c_{j}$ for $j>j'$.   The choice $n=2$ reduces to a self-adjoint representation of a Clifford algebra;  it describes Majoranas, namely fermionic coordinates. For $n\geqslant 3$ one obtains a generic algebra of parafermionic coordinates, whose generators are not self-adjoint.    
Note that if $\{c_{j}\}$ are a set of $L$ parafermion generators of order $n$, then $\{c_{j}^{*}\}$ is another set of $L$ parafermion generators of order $n$.   

Parafermion commutation relations appeared in both the mathematics and the physics literature, long before the definitions of the algebras cited above.  J.~J.~Sylvester introduced matrices satisfying parafermion commutation relations in 1882, see \cite{Sylvester-1,Sylvester-2}.   In 1953, Herbert S. Green proposed such commutators for fields \cite{Green}.  More recent examples occur in  \cite{Hooft,Fradkin-Kadanoff}.

The relations \eqref{eq:para1} arise from studying representations of the braid group; a new discussion appears in~\cite{CO}. Generally, representations of the braid group lead to a variety of statistics and have been the focus of intense research over the last decades, see for example \cite{Froehlich-Gabbiani}.

Paul Fendley~\cite{Fendley2,Fendley} gave a parafermion representation for Rodney Baxter's clock Hamiltonian and for some related spin chains~\cite{Baxter-1,Baxter-2,Baxter-3}, and discovered matrices similar to those in \cite{Sylvester-1}; see our remarks in \S\ref{Sect:Baxter Clock}. Some further examples occur in \cite{Baxter-4,Au-Perk}.  
Recently there has been a great deal of interest in the possibility to obtain  parafermion states  in one and two-dimensional  model systems\hbox{\cite{BarkeshliQi,CASNatCom,LBRSPRX,VPRB,BMQ,MONGPRX,KLArxiv,KL2}.}

Two sets of authors have proposed a classification of topological and non-topological phases in parafermionic chains~\hbox{\cite{MotrukBergTurnerAriPollmann,BondesanQuella}.} 

\subsection{Reflection Positivity (RP)}
Konrad Osterwalder and Robert Schrader discovered  RP for  bosons and fermion fields~\cite{OS1}, after which RP became the standard way to relate statistical physics to quantum theory, and to justify inverse Wick rotation.  Variations of this property have been central in hundreds of subsequent papers on quantum theory and also on condensed-matter physics, especially in the study of ground states and phase transitions. So RP is  fundamental, and it is important to know when it holds.   

Let $A\in\mathfrak{A}_{-}$ belong to an algebra of observables localized on one side of a reflection plane; let $\vartheta(A)$ denote the reflected observable localized  on the other side of the plane. The reflection $\vartheta$  is said to have the RP-property on $\mathfrak{A}_{-}$ with respect to the expectation $\lra{\ \cdot \ }$, if always  $\lra{{A\,\vartheta(A)}} \geqslant 0$.

In this paper we show that RP applies in lattice statistical mechanical systems generated by parafermions.  The expectation that we study here is a trace defined with the Boltzmann weight $e^{-H}$ for a class of Hamiltonians specified in \S\ref{Sect:Main}. 
Our Hamiltonians are not necessarily self-adjoint. However in case the Hamiltonian is reflection symmetric, then the partition function
	\be
		\mathfrak{Z}
		= \Tr(e^{- H}) > 0
	\ee
is automatically real and positive. 
We give our  main result  in Theorem \ref{prop:reflection_positivity_para} of \S\ref{Sect:Main}, where we show that the corresponding expectations of the form
	\be
		\lra{\ \cdot \ }
		= \Tr( \ \cdot \ e^{-H})
	\ee
are RP with respect to an {\em algebra of observables} $\mathfrak{A}_{-}^{n}$ generated by monomials in parafermions of degree $n$.   This paper generalizes our earlier results on the  algebra of fermionic coordinates~\cite{JP}.

\setcounter{equation}{0}
\section{Basic Properties of Monomials in Parafermions}
Parafermions  $c_{j}$ yield ordered monomials with exponents taken mod $n$, 
\begin{equation}\label{eq:Mbeta}
C_{\mathfrak{I}}=c_{1}^{n_{1}}c_{2}^{n_{2}}\cdots c_{L}^{n_{L}}\,,
\quad\text{where}\quad
0\leqslant n_{j}\leqslant n-1\;.
\end{equation}
Define the set of exponents,  
$\mathfrak{I}=\{n_{1},\ldots,n_{L}\}$, and denote the total degree as   
	\be
		\vert\mathfrak{I}\vert 
		=\sum_{j=1}^{L} n_{j}\;.
	\ee

\subsection{Algebras of Parafermions}
The parafermion monomials $C_{\mathfrak{I}}$ generate an algebra that we denote $\mathfrak{A}$. 
Divide  the $L$ parafermions $c_{i}$ into two subsets, according to whether  or not $i\leqslant \frac12 L$.
Define $\mathfrak{A}_{-}$ as the algebra generated by monomials $C_{\mathfrak{I}}$, for which $n_{j}=0$ for all $j>\frac{1}{2}L$. 
Correspondingly let $\mathfrak{A}_{+}$ denote the algebra generated by monomials $C_{\mathfrak{I}}$, for which $n_{j}=0$ for all $j\leqslant\frac{1}{2}L$. 
In addition, define the  ``order $k$''-parafermion subalgebras $\mathfrak{A}^{k}_{\pm}\subset \mathfrak{A}_{\pm}$ as follows: 
	\be\label{eq:Subalgebras}
		\mathfrak{A}^{k}_{\pm} \text{ is the algebra generated by }
		 C_{\mathfrak{I}}\in\mathfrak{A}_{\pm}\;,
		\text{ with } \abs{\mathfrak{I}} =k\;.
	\ee

One can add the sets indexing parafermions by setting
	\be
		\mathfrak{I} + \mathfrak{I}'
		= \{n_{1}+n'_{1}, \ldots, n_{L}+n'_{L}\}\;.
	\ee
Clearly there is no loss in generality to require that one takes each sum $n_{j}+n'_{j}$ mod $n$. Define the numbers
	\be\label{eq:Contractions}
		\mathfrak{I}\circ \mathfrak{I}'
		= \sum_{1\leqslant j<j'\leqslant L} n_{j} n'_{j'}\;,
		\quad\text{and}\quad
		\mathfrak{I}\wedge \mathfrak{I}'
		= \mathfrak{I}\circ \mathfrak{I}'
		- \mathfrak{I}'\circ \mathfrak{I}\;.
	\ee
With these definitions   
	\be\label{Reverse C}
		C_{\mathfrak{I}}C_{\mathfrak{I}'}
		= \omega^{-\mathfrak{I}\circ \mathfrak{I}'}\,
		C_{\mathfrak{I}+ \mathfrak{I}'}
		= \omega^{-\mathfrak{I}\wedge \mathfrak{I}'}\,C_{\mathfrak{I}'}C_{\mathfrak{I}}\;.
	\ee 		
Denote  the complement of $\mathfrak{I}$ by 
	$
	 	\mathfrak{I}^{c}=\{n-n_{1},\ldots,n-n_{L}\}		
	$.
One has 
	\be\label{eq:Adjoint C}
		C_{\mathfrak{I}}^{*}
		= 
		\omega^{-\mathfrak{I}\circ\mathfrak{I}}\,
		C_{\mathfrak{I}^{c}}\;,
		\quad\text{and}\quad
		C_{\mathfrak{I}}^{*} \,C_{\mathfrak{I}}=I
		= C_{\mathfrak{I}}\,C_{\mathfrak{I}}^{*}\;.
	\ee

\subsection{Reflection}
Define the reflection $\vartheta$ as the map 
	\be
		i\mapsto \vartheta i =  L-i  +1\;.
	\ee 
Represent $\vartheta $ as an anti-unitary operator on $\mathcal{H}$. Conjugation by $\vartheta$ (which we denote $\vartheta(A)$) yields an anti-linear automorphism of the algebra $\mathfrak{A}$, 
	\be\label{Reflection Definition}
		\vartheta (c_{i}) 
		= \vartheta c_{i} \vartheta^{-1}
		=c_{\vartheta i}^{*}=c_{\vartheta i}^{n-1}\;,
		\quad\text{and}\quad
		\vartheta\lrp{c_{j}c_{k}}=\vartheta\lrp{c_{j}}\vartheta\lrp{c_{k}}\,.
	\ee

Set  $\vartheta\mathfrak{I}
		=\{n_{L}, \ldots, n_{1}\}$, and note that $\lrp{\vartheta\mathfrak{I}}^{c}
		= \vartheta(\mathfrak{I}^{c})=\vartheta\mathfrak{I}^{c}$.  Using \eqref{eq:Contractions}, one sees that 
	\be\label{Reflection C}
		\vartheta(C_{\mathfrak{I}})
		= \omega^{-\mathfrak{I}\circ\mathfrak{I}}  \,C_{\vartheta\mathfrak{I}^{c}}\;.
	\ee
Take $\Lambda_{-}=\{1, 2, \ldots, L/2\}$ and  $\Lambda_{+}=\{L/2+1,  \ldots, L\}$ to divide the points $\Lambda=\Lambda_{-}\cup\Lambda_{+}$ into two sets  $\Lambda_{\pm}$  exchanged by reflection. To simplify notation,  we relabel the sites in order to put sites $1$ to $L/2$ on one side of the reflection plane and sites $L/2+1$ to $L$ on the other side.  Periodic boundary conditions would relate sites $1$ and $L$.

By definition $\mathfrak{A}$ is the algebra generated by the parafermions $c_{j}$ with $j\in\Lambda$. 
Denote $C_{\mathfrak{I}}\subset\mathfrak{A}_{\pm}$ also by  
$\mathfrak{I}\subset\Lambda_{\pm}$.  In this case $n_{j}=0$ for all $j>L/2$.
For $\mathfrak{I}\subset\Lambda_{+}$ and $\mathfrak{I'}\subset \Lambda_{-}$, one has $\mathfrak{I} \circ \mathfrak{I}'=0$.  So in this case   
	\be\label{-Wedge+}
		\mathfrak{I} \wedge \mathfrak{I}'
		= -\mathfrak{I}' \circ \mathfrak{I}
		= - \sum_{j,j'} n_{j}  n^{\prime}_{j^{\prime}} 
		= - \abs{\mathfrak{I}^{\phantom\prime}}
		 \abs{\mathfrak{I}^{\prime}}\;.
	\ee

\subsection{Gauge Transformations}	
We introduce the family of local gauge automorphisms $U_{j}$ defined by 
	\be
		c_{j} \mapsto U_{j'}(c_{j}) = \omega^{\delta_{jj'}} \,c_{j}\;,
		\quad\text{for}\quad
		j=1,\ldots, L\;.
	\ee
Here $\delta_{jj'}$ is the Kronecker delta function.   As shown in \cite{CO}, this transformation can be implemented on the Hilbert space of parafermions by the unitary transformation $V_{j}=e^{-2\pi iN_{j}/n}$, where $N_{j}$ is a parafermionic number operator, and $U_{j'}(c_{j})=V_{j'}c_{j}V_{j'}^{*}$.  The different $V_{j}$ commute.

Global gauge transformations are defined by $U=\prod_{j=1}^{L} U_{j}$ and  transform all parafermions by the same phase $\omega$. 
Special significance is attached to the parafermion monomials that are invariant under global gauge transformations.  In fact we say that  the globally-gauge-invariant parafermion monomials are {\em observables}. We call the gauge-invariant algebra $\mathfrak{A}^{n}$ the {\em algebra of observables}.

\setcounter{equation}{0}
\section{Reflection Symmetry and Gauge Invariance}
Here we show that certain multiples of the monomials \eqref{eq:Mbeta} are both reflection-symmetric and gauge invariant.  These monomials may not be hermitian.    We also discuss the general form of reflection-symmetric, gauge-invariant, polynomial Hamiltonians. 

\begin{lem}[Elementary Rearrangement]  \label{Lemma:Elementary Rearrangement}
For $\mathfrak{I}_{\pm}\subset \Lambda_{\pm}$,	
	\be\label{eq:Reverse Theta C C}
		C_{\mathfrak{I}_{+}}\, C_{\mathfrak{I}_{-}}
		= \omega^{-\abs{\mathfrak{I}_{+}}\abs{\mathfrak{I}_{-}}}\,C_{\mathfrak{I}_{-}}\,C_{\mathfrak{I}_{+}}\;.
	\ee
Also for $\mathfrak{I},\mathfrak{I}'\subset\Lambda_{-}$, 
	\be\label{eq:Reverse Theta C C-2}
\vartheta(C_{\mathfrak{I}})\, C_{\mathfrak{I}'}
		= \omega^{\abs{\mathfrak{I}\phantom{'}}\,
		\abs{\mathfrak{I}'}}\,
		C_{\mathfrak{I}'}  \,\vartheta(C_{\mathfrak{I}})\;.
	\ee
\end{lem}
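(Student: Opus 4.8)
The plan is to deduce both identities from the parafermion relations \eqref{eq:para1} and the monomial calculus already assembled — chiefly \eqref{Reverse C}, \eqref{Reflection C}, and \eqref{-Wedge+} — using throughout that every index occurring in $\Lambda_{+}$ exceeds every index occurring in $\Lambda_{-}$. For \eqref{eq:Reverse Theta C C} I would argue directly: $C_{\mathfrak{I}_{+}}$ is built from generators $c_{j}$ with $j>L/2$ and $C_{\mathfrak{I}_{-}}$ from generators $c_{j'}$ with $j'\leqslant L/2$, so moving $C_{\mathfrak{I}_{-}}$ to the left of $C_{\mathfrak{I}_{+}}$ amounts to commuting each of the $\abs{\mathfrak{I}_{-}}$ generators of $C_{\mathfrak{I}_{-}}$ past each of the $\abs{\mathfrak{I}_{+}}$ generators of $C_{\mathfrak{I}_{+}}$; in each of these $\abs{\mathfrak{I}_{+}}\,\abs{\mathfrak{I}_{-}}$ interchanges the indices satisfy $j'<j$, so \eqref{eq:para1} contributes the same fixed power of $\omega$ every time, and collecting them gives \eqref{eq:Reverse Theta C C}. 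The same conclusion also drops out of the second equality in \eqref{Reverse C}, once one notes that $\mathfrak{I}_{+}\circ\mathfrak{I}_{-}=0$ and applies \eqref{-Wedge+} to the surviving contraction.

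For \eqref{eq:Reverse Theta C C-2} I would bootstrap from the first part. With $\mathfrak{I},\mathfrak{I}'\subset\Lambda_{-}$, formula \eqref{Reflection C} gives $\vartheta(C_{\mathfrak{I}})=\omega^{-\mathfrak{I}\circ\mathfrak{I}}\,C_{\vartheta\mathfrak{I}^{c}}$, and $\vartheta\mathfrak{I}^{c}\subset\Lambda_{+}$ because, read modulo $n$, the complement $\mathfrak{I}^{c}$ is still supported in $\Lambda_{-}$ while $\vartheta$ carries $\Lambda_{-}$ onto $\Lambda_{+}$. So \eqref{eq:Reverse Theta C C} applies with $\mathfrak{I}_{+}=\vartheta\mathfrak{I}^{c}$ and $\mathfrak{I}_{-}=\mathfrak{I}'$, giving $C_{\vartheta\mathfrak{I}^{c}}\,C_{\mathfrak{I}'}=\omega^{-\abs{\vartheta\mathfrak{I}^{c}}\,\abs{\mathfrak{I}'}}\,C_{\mathfrak{I}'}\,C_{\vartheta\mathfrak{I}^{c}}$. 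Then I would rewrite the exponent using that reflection preserves total degree, $\abs{\vartheta\mathfrak{I}^{c}}=\abs{\mathfrak{I}^{c}}$, and that $\abs{\mathfrak{I}^{c}}\equiv-\abs{\mathfrak{I}}\pmod n$ directly from the definition of the complement, so that $\omega^{-\abs{\vartheta\mathfrak{I}^{c}}\,\abs{\mathfrak{I}'}}=\omega^{\abs{\mathfrak{I}}\,\abs{\mathfrak{I}'}}$. Finally I would multiply through by the central scalar $\omega^{-\mathfrak{I}\circ\mathfrak{I}}$, which cancels between the two copies of $C_{\vartheta\mathfrak{I}^{c}}$, and read the outcome back through \eqref{Reflection C} to obtain $\vartheta(C_{\mathfrak{I}})\,C_{\mathfrak{I}'}=\omega^{\abs{\mathfrak{I}}\,\abs{\mathfrak{I}'}}\,C_{\mathfrak{I}'}\,\vartheta(C_{\mathfrak{I}})$.

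I do not expect a genuine obstacle here: the lemma is bookkeeping built on \eqref{Reverse C}, \eqref{Reflection C}, and \eqref{-Wedge+}. The step that needs care is the phase in the second part, above all the congruence $\abs{\mathfrak{I}^{c}}\equiv-\abs{\mathfrak{I}}\pmod n$, which is exactly what flips the exponent sign coming from \eqref{eq:Reverse Theta C C} into the one displayed in \eqref{eq:Reverse Theta C C-2}. One must also keep in mind that $\vartheta$ is anti-linear; but since only the finished formula \eqref{Reflection C} is invoked, this is already accounted for, and everything else is multiplication by scalars together with \eqref{eq:Reverse Theta C C}.
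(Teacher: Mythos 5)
Your proposal is correct and follows essentially the same route as the paper: part one is \eqref{Reverse C} specialized via $\mathfrak{I}_{+}\circ\mathfrak{I}_{-}=0$ and \eqref{-Wedge+} (your direct generator-by-generator count is just an unwinding of the same computation), and part two reduces $\vartheta(C_{\mathfrak{I}})$ to $C_{\vartheta\mathfrak{I}^{c}}$ via \eqref{Reflection C} and simplifies the exponent using $\abs{\vartheta\mathfrak{I}^{c}}=nL-\abs{\mathfrak{I}}\equiv-\abs{\mathfrak{I}}\pmod n$, exactly as in the paper.
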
 

\begin{proof}
For $\mathfrak{I}_{\pm}\subset\Lambda_{\pm}$, one has $\mathfrak{I}_{+}\circ\mathfrak{I}_{-}=0$.  Hence
	\be\label{Wedge Vanishing}
		\mathfrak{I}_{+}\wedge\mathfrak{I}_{-}
		=-\mathfrak{I}_{-}\circ\mathfrak{I}_{+}
		=-\abs{\mathfrak{I}_{-}} \,\abs{\mathfrak{I}_{+}}\;.
	\ee  
Therefore \eqref{Reverse C} can be written in this case as 
\eqref{eq:Reverse Theta C C}.  Also $\vartheta\mathfrak{I}^{c}\in\Lambda_{+}$, so \eqref{Reflection C} and \eqref{Wedge Vanishing} ensures   
	\be
		\vartheta(C_{\mathfrak{I}})\, C_{\mathfrak{I}'}
		= \omega^{-\vartheta\mathfrak{I}^{c}\circ 
		{\mathfrak{I}'}}
		\,
		C_{\mathfrak{I}'}\,\vartheta(C_{\mathfrak{I}}) \;.
	\ee
But $\abs{\vartheta\mathfrak{I}^{c}}=nL-\abs{\mathfrak{I}}$,  so  \eqref{eq:Reverse Theta C C-2} holds.  \hfill $\square$
\end{proof}

\begin{prop}\label{Prop:RS Phase Sigma}
Let $C_{\mathfrak{I}}\in\mathfrak{A}_{-}$ have the form \eqref{eq:Mbeta},  and let
	\be\label{eq:X term definition}
		X_{\mathfrak{I}}=
		\omega^{\frac{1}{2}\abs{\mathfrak{I}}^{2}}\,C_{\mathfrak{I}}\,
		\vartheta(C_{\mathfrak{I}})\;,
		\quad\text{where} \quad
		\omega=e^{\frac{2\pi i}{n}}\;.
	\ee
Then $X_{\mathfrak{I}}$ is both reflection invariant and globally gauge invariant.
More generally for $X_{\mathfrak{I}}=e^{i\theta}\,C_{\mathfrak{I}}\,
		\vartheta(C_{\mathfrak{I}})$, the reflection-invariant combination $X_{\mathfrak{I}} + \vartheta(X_{\mathfrak{I}})$ is a real multiple of \eqref{eq:X term definition}.
\end{prop}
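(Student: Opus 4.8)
The plan is to treat the gauge invariance and the reflection invariance of $X_{\mathfrak{I}}$ in turn, and then to read off the general statement from the same reordering identity. For \textbf{gauge invariance}, I would apply the global gauge automorphism $U$ to each factor of $X_{\mathfrak{I}}$ separately. Since $U$ multiplies every $c_{j}$ by $\omega$, it sends the degree-$\abs{\mathfrak{I}}$ monomial $C_{\mathfrak{I}}$ to $\omega^{\abs{\mathfrak{I}}}C_{\mathfrak{I}}$. For $\vartheta(C_{\mathfrak{I}})$, recall $\vartheta(c_{j})=c_{\vartheta j}^{\,n-1}$, so $U$ multiplies each such factor by $\omega^{n-1}=\omega^{-1}$ and hence $\vartheta(C_{\mathfrak{I}})$ by $\omega^{-\abs{\mathfrak{I}}}$; alternatively one may use \eqref{Reflection C} together with the congruence $\abs{\vartheta\mathfrak{I}^{c}}\equiv-\abs{\mathfrak{I}}\pmod n$, which holds because $n_{j}=0$ for $j>L/2$. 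The two phases cancel, and because $U$ is linear it leaves the scalar $\omega^{\frac{1}{2}\abs{\mathfrak{I}}^{2}}$ fixed, so $U(X_{\mathfrak{I}})=X_{\mathfrak{I}}$.

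For \textbf{reflection invariance}, I would use that $\vartheta$ is an anti-linear, $*$-preserving algebra automorphism with $\vartheta^{2}(C_{\mathfrak{I}})=C_{\mathfrak{I}}$ --- the latter because $\vartheta^{2}(c_{i})=\vartheta(c_{\vartheta i}^{*})=\vartheta(c_{\vartheta i})^{*}=c_{i}$ and $\vartheta$ is multiplicative. Anti-linearity and multiplicativity then give
\be
 \vartheta(X_{\mathfrak{I}})=\overline{\omega^{\frac{1}{2}\abs{\mathfrak{I}}^{2}}}\;\vartheta(C_{\mathfrak{I}})\,\vartheta^{2}(C_{\mathfrak{I}})=\omega^{-\frac{1}{2}\abs{\mathfrak{I}}^{2}}\,\vartheta(C_{\mathfrak{I}})\,C_{\mathfrak{I}}\;.
\ee
Now apply Lemma \ref{Lemma:Elementary Rearrangement}, specifically \eqref{eq:Reverse Theta C C-2} with $\mathfrak{I}'=\mathfrak{I}$, to get $\vartheta(C_{\mathfrak{I}})\,C_{\mathfrak{I}}=\omega^{\abs{\mathfrak{I}}^{2}}\,C_{\mathfrak{I}}\,\vartheta(C_{\mathfrak{I}})$. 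The exponents combine to $-\tfrac{1}{2}\abs{\mathfrak{I}}^{2}+\abs{\mathfrak{I}}^{2}=\tfrac{1}{2}\abs{\mathfrak{I}}^{2}$, so $\vartheta(X_{\mathfrak{I}})=X_{\mathfrak{I}}$.

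For the \textbf{general statement}, the identical reordering applied to $X_{\mathfrak{I}}=e^{i\theta}C_{\mathfrak{I}}\,\vartheta(C_{\mathfrak{I}})$ yields $\vartheta(X_{\mathfrak{I}})=e^{-i\theta}\omega^{\abs{\mathfrak{I}}^{2}}C_{\mathfrak{I}}\,\vartheta(C_{\mathfrak{I}})$, hence
\be
 X_{\mathfrak{I}}+\vartheta(X_{\mathfrak{I}})=\lrp{e^{i\theta}+e^{-i\theta}\omega^{\abs{\mathfrak{I}}^{2}}}\,C_{\mathfrak{I}}\,\vartheta(C_{\mathfrak{I}})\;.
\ee
Dividing the bracket by $\omega^{\frac{1}{2}\abs{\mathfrak{I}}^{2}}$ gives $e^{i(\theta-\pi\abs{\mathfrak{I}}^{2}/n)}+e^{-i(\theta-\pi\abs{\mathfrak{I}}^{2}/n)}=2\cos(\theta-\pi\abs{\mathfrak{I}}^{2}/n)\in\mathbb{R}$, so $X_{\mathfrak{I}}+\vartheta(X_{\mathfrak{I}})$ is indeed a real multiple of \eqref{eq:X term definition}.

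Every step is an elementary manipulation, so there is no serious obstacle; the one point that deserves care is the bookkeeping of the half-integer power of $\omega$ --- one fixes the branch $\omega^{1/2}=e^{i\pi/n}$ so that $\omega^{\frac{1}{2}\abs{\mathfrak{I}}^{2}}$ is unambiguous --- together with the observation that it is precisely the \emph{anti}-linearity of $\vartheta$, which conjugates $\omega^{+\frac{1}{2}\abs{\mathfrak{I}}^{2}}$ into $\omega^{-\frac{1}{2}\abs{\mathfrak{I}}^{2}}$, combined with the factor $\omega^{\abs{\mathfrak{I}}^{2}}$ from \eqref{eq:Reverse Theta C C-2}, that forces the exponent $\tfrac{1}{2}\abs{\mathfrak{I}}^{2}$ appearing in \eqref{eq:X term definition}; with a linear reflection the cancellation would not occur.
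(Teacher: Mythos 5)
Your proof is correct and follows essentially the same route as the paper: anti-linearity of $\vartheta$ (with $\vartheta^{2}=\mathrm{id}$) plus the rearrangement \eqref{eq:Reverse Theta C C-2} with $\mathfrak{I}'=\mathfrak{I}$ for reflection invariance, the cancellation of $\omega^{\pm\abs{\mathfrak{I}}}$ under $U$ for gauge invariance, and the factor $2\cos\lrp{\theta-\pi\abs{\mathfrak{I}}^{2}/n}$ for the general statement. Your explicit fixing of the branch $\omega^{1/2}=e^{i\pi/n}$ and the spelled-out check that $\vartheta^{2}(C_{\mathfrak{I}})=C_{\mathfrak{I}}$ are minor additions of care beyond what the paper records, but the argument is the same.
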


\begin{proof}
One has 
	\begin{eqnarray}\label{eq:varthetaX}
		\vartheta(X_{\mathfrak{I}})
		&=&\vartheta(\omega^{\frac{1}{2}\vert\mathfrak{I}\vert^{2}}\,
			C_{\mathfrak{I}}\,\vartheta(C_{\mathfrak{I}}))
		= {\omega^{-\frac{1}{2}\vert\mathfrak{I}\vert^{2}}}\,
			\vartheta(C_{\mathfrak{I}})\,C_{\mathfrak{I}}\;.
\eeq
Substitute the elementary rearrangement of Lemma \ref{Lemma:Elementary Rearrangement} with $\mathfrak{I}=\mathfrak{I}'$  into \eqref{eq:varthetaX}.  This entails $\vartheta(X_{\mathfrak{I}})=X_{\mathfrak{I}}$ as claimed.  

Furthermore $X_{\mathfrak{I}}$ is a globally-gauge-invariant monomial, for 
	\be
		U C_{\mathfrak{I}} U^{*}
		= \omega^{\vert\mathfrak{I}\vert}\,C_{\mathfrak{I}}\;,
		\quad\text{while}\quad
		U \vartheta(C_{\mathfrak{I}})U^{*}
		=  \omega^{-\vert\mathfrak{I}\vert}\,
		\vartheta(C_{\mathfrak{I}})\;.
	\ee
As $U$ is linear, we infer $UX_{\mathfrak{I}}U^{*}=X_{\mathfrak{I}}$.

The second assertion also follows, by noting that the multiple in question is  $2\cos\lrp{\theta - \frac12 \abs{\mathfrak{I}}^{2}}$.   \hfill $\square$
\end{proof}

\begin{cor}
Reflection-invariant, globally-gauge-invariant polynomials that are linear combinations of monomials \eqref{eq:X term definition} can be written as  
	\be\label{Interaction Term-1}
		\sum_{\mathfrak{I}\subset\Lambda_{-}   \atop  \abs{\mathfrak{I} }>0 } (-1)^{{1+ \abs{\mathfrak{I}}} }\,
		\omega^{ \frac{1}{2}{\abs{\mathfrak{I}}^{2}}}
		J_{\mathfrak{I}\,\vartheta{\mathfrak{I}}} \,
		C_{\mathfrak{I}}\,\vartheta(C_{\mathfrak{I}})\;,
		\quad\text{with real couplings}\ 
		J_{\mathfrak{I}\,\vartheta{\mathfrak{I}}}\;.
	\ee
\end{cor}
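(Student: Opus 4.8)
The plan is to read the statement off almost directly from Proposition~\ref{Prop:RS Phase Sigma}; the only work is bookkeeping of signs and the anti\-/linearity of $\vartheta$. First I would start from a general linear combination $P=\sum_{\mathfrak{I}\subset\Lambda_{-}} a_{\mathfrak{I}}\,X_{\mathfrak{I}}$ of the monomials \eqref{eq:X term definition}, with coefficients $a_{\mathfrak{I}}\in\C$. By Proposition~\ref{Prop:RS Phase Sigma} each $X_{\mathfrak{I}}$ is globally gauge invariant, hence $U P U^{*}=P$ with no restriction on the $a_{\mathfrak{I}}$; so gauge invariance is automatic and only reflection invariance is a genuine constraint.

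Next I would impose $\vartheta(P)=P$. Because $\vartheta$ is anti\-/linear and $\vartheta(X_{\mathfrak{I}})=X_{\mathfrak{I}}$ (again Proposition~\ref{Prop:RS Phase Sigma}), one gets $\vartheta(P)=\sum_{\mathfrak{I}}\overline{a_{\mathfrak{I}}}\,X_{\mathfrak{I}}$. The monomials $X_{\mathfrak{I}}$, for distinct $\mathfrak{I}\subset\Lambda_{-}$, are linearly independent in $\mathfrak{A}$: by \eqref{Reflection C} one has $X_{\mathfrak{I}}=\omega^{\frac12\abs{\mathfrak{I}}^{2}-2\mathfrak{I}\circ\mathfrak{I}}\,C_{\mathfrak{I}}C_{\vartheta\mathfrak{I}^{c}}$, where $C_{\mathfrak{I}}$ ranges over a basis of $\mathfrak{A}_{-}$ and $C_{\vartheta\mathfrak{I}^{c}}$ over a basis of $\mathfrak{A}_{+}$, and a product $C_{\mathfrak{I}}C_{\vartheta\mathfrak{I}^{c}}$ is a scalar multiple of the single monomial $C_{\mathfrak{I}+\vartheta\mathfrak{I}^{c}}$ with the two index sets occupying disjoint halves of $\Lambda$; distinct $\mathfrak{I}$ give distinct such monomials, which are linearly independent (e.g.\ by orthogonality under $A,B\mapsto\Tr(A^{*}B)$ coming from \eqref{Reverse C} and \eqref{eq:Adjoint C}). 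Therefore $\vartheta(P)=P$ forces $a_{\mathfrak{I}}=\overline{a_{\mathfrak{I}}}$, i.e.\ every coefficient is real.

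Then I would merely relabel. Setting $a_{\mathfrak{I}}=(-1)^{1+\abs{\mathfrak{I}}}\,J_{\mathfrak{I}\,\vartheta{\mathfrak{I}}}$ defines couplings $J_{\mathfrak{I}\,\vartheta{\mathfrak{I}}}$; since $(-1)^{1+\abs{\mathfrak{I}}}$ is a fixed sign, reality of $a_{\mathfrak{I}}$ is the same as reality of $J_{\mathfrak{I}\,\vartheta{\mathfrak{I}}}$. Substituting the definition $X_{\mathfrak{I}}=\omega^{\frac12\abs{\mathfrak{I}}^{2}}C_{\mathfrak{I}}\vartheta(C_{\mathfrak{I}})$ turns $P$ into exactly \eqref{Interaction Term-1}. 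The index $\mathfrak{I}=\emptyset$ gives $X_{\emptyset}=I$, a scalar, which I would discard (or absorb into the remaining part of a Hamiltonian); this accounts for the restriction $\abs{\mathfrak{I}}>0$ in the displayed sum.

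I do not expect a real obstacle: the corollary is essentially a corollary. The one point that deserves a sentence of justification is the linear independence of the $X_{\mathfrak{I}}$, used to pass from $\vartheta(P)=P$ to reality of the coefficients. If one prefers to bypass it, the same conclusion follows from the ``more generally'' clause of Proposition~\ref{Prop:RS Phase Sigma}: for any complex combination $\sum_{\mathfrak{I}} b_{\mathfrak{I}}\,C_{\mathfrak{I}}\vartheta(C_{\mathfrak{I}})$ the reflection\-/symmetrized polynomial $\tfrac12\sum_{\mathfrak{I}}\big(b_{\mathfrak{I}}C_{\mathfrak{I}}\vartheta(C_{\mathfrak{I}})+\vartheta\big(b_{\mathfrak{I}}C_{\mathfrak{I}}\vartheta(C_{\mathfrak{I}})\big)\big)$ is termwise a \emph{real} multiple of \eqref{eq:X term definition}, which again produces the stated form with real $J_{\mathfrak{I}\,\vartheta{\mathfrak{I}}}$.
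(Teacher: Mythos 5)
Your proposal is correct and follows essentially the route the paper intends: the corollary is stated without a separate proof precisely because it is immediate from Proposition~\ref{Prop:RS Phase Sigma}, and your second, ``more generally''-clause argument is exactly that implicit justification, while your first argument merely adds an explicit (and valid) linear-independence step. One harmless slip: by \eqref{Reflection C} the phase in that step should be $\omega^{\frac12\abs{\mathfrak{I}}^{2}-\mathfrak{I}\circ\mathfrak{I}}$ rather than $\omega^{\frac12\abs{\mathfrak{I}}^{2}-2\,\mathfrak{I}\circ\mathfrak{I}}$, but since only the nonvanishing of the scalar matters there, nothing is affected.
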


\subsection{Hermitian Hamiltonians}
In general a monomial $Y_{\mathfrak{I}}$ entering the sum \eqref{Interaction Term-1} is not hermitian, but 
	\beq
		Y_{\mathfrak{I}}^{*}
		&=& \omega^{-\abs{\mathfrak{I}}^{2}}
		(-1)^{{1+ \abs{\mathfrak{I}}} }\,
		\omega^{\frac12\abs{\mathfrak{I}}^{2}}
		J_{\mathfrak{I}\,\vartheta{\mathfrak{I}}} \,
		\vartheta(C_{\mathfrak{I}}^{*})\, C_{\mathfrak{I}}^{*}\nn
		&=& (-1)^{{1+ \abs{\mathfrak{I}}} }\,
		\omega^{\frac12\abs{\mathfrak{I}}^{2}}
		J_{\mathfrak{I}\,\vartheta{\mathfrak{I}}} \,
		C_{\mathfrak{I}^{c}}\,
		\vartheta(C_{\mathfrak{I}^{c}})\;.
	\eeq
In the second equality we use \eqref{eq:Adjoint C} and \eqref{eq:Reverse Theta C C}.  Therefore, the monomial $Y_{\mathfrak{I}}$ is hermitian only if 
$\mathfrak{I}^{c}=\mathfrak{I}$.  This entails $n_{i}=\frac12 n$, for every $i$.  So a necessary condition for $Y_{\mathfrak{I}}$  to be hermitian  is that $n$ is even.

For example if  $n=2$ and $L=2$ with the two sites $\vartheta 1=2$, one can take $\vert\mathfrak{I}\vert=1$.  Then $\omega=-1$, and  the monomial 
	\be
		Y_{\mathfrak{I}} 
		= i J_{\mathfrak{I}\,\vartheta\mathfrak{I}} \,c_{1}\vartheta(c_{1}) 
	\ee
has the form \eqref{Interaction Term-1}; it 
is both reflection-symmetric and hermitian.  On the other hand, any such $Y_{\mathfrak{I}}$  yields the polynomial  $Y_{\mathfrak{I}} + Y_{\mathfrak{I}}^{*}$, that is  both reflection symmetric and  hermitian. For example, with $n=3$ and $L=2$, one has $\omega=e^{\frac{2\pi i}{3}}$.  Then the monomial 
	\be
		Y_{\mathfrak{I}}
		=\omega^{\frac12} J_{\mathfrak{I} \,\vartheta{\mathfrak{I}}}\,c_{1} \vartheta(c_{1})
		=\omega^{\frac12} J_{\mathfrak{I} \,\vartheta{\mathfrak{I}}}\,c_{1} c_{2}^{*}
		=\omega^{\frac12} J_{\mathfrak{I} \,\vartheta{\mathfrak{I}}}\,c_{1} c_{2}^{2}\;,
	\ee
yields the reflection-symmetric, hermitian polynomial   
\be
Y_{\mathfrak{I}} + Y_{\mathfrak{I}}^{*}
= \omega^{\frac{1}{2}}J_{\mathfrak{I}\, \vartheta\mathfrak{I}} \,(c_{1}c_{2}^{2}+c_{1}^{2}c_{2})\;.
\ee

\setcounter{equation}{0}
\section{A Basis for Parafermions}\label{Sect:Basis}
Let $C_{\mathfrak{I}}=c_{1}^{n_{1}}\cdots c_{L}^{n_{L}}$ be one of $n^{L}$ monomials of the form \eqref{eq:Mbeta}, with $L$ even.  Let $C_{\mathfrak{I}}$ act on a Hilbert space $\mathcal{H}$ of $\dim(\mathcal {H})=n^{L/2}$.

\begin{prop}\label{prop:monomials_para}
The monomials $C_{\mathfrak{I}}$ are linearly independent, and provide a basis for the $n^{L}$ linear transformations on $\mathcal{H}$. Furthermore  $\Tr\lrp{C_{\mathfrak{I}}}=0$, unless $\vert\mathfrak{I}\vert=0$.     Any linear transformation $A$ on $\mathcal{H}$ has the decomposition 
	\be\label{eq:A_expansion}
A=\sum_{\mathfrak{I}}a_{\mathfrak{I}}\,C_{\mathfrak{I}}\,,\quad\text{where}\quad a_{\mathfrak{I}}=\frac{1}{n^{L/2}}\Tr\left(C_{\mathfrak{I}}^{*}A\right)\,.
	\ee
\end{prop}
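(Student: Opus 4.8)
The plan is to reduce everything to the single trace-orthogonality relation
\be\label{eq:plan-ortho}
	\Tr\lrp{C_{\mathfrak{I}}^{*}\,C_{\mathfrak{I}'}}
	= n^{L/2}\,\delta_{\mathfrak{I},\mathfrak{I}'}\;,
\ee
after which linear independence, the basis property, and the expansion \eqref{eq:A_expansion} all follow by routine linear algebra.

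The first and only substantial step is to show that $\Tr(C_{\mathfrak{I}})=0$ whenever $\vert\mathfrak{I}\vert>0$. I would pick an index $m$ with $n_{m}\neq 0$ and apply the local gauge automorphism $U_{m}$, which sends $c_{m}\mapsto\omega c_{m}$ and fixes the other generators, so that $U_{m}(C_{\mathfrak{I}})=\omega^{n_{m}}C_{\mathfrak{I}}$. Since $U_{m}$ is implemented on $\mathcal{H}$ by the unitary $V_{m}$ of \cite{CO} --- which exists in any case, because by \cite{Morris-1} the representation is irreducible, hence $\mathfrak{A}$ is the full algebra of linear transformations on $\mathcal{H}$ and every automorphism of it is inner --- cyclicity of the trace gives $\Tr(C_{\mathfrak{I}})=\Tr(V_{m}C_{\mathfrak{I}}V_{m}^{*})=\omega^{n_{m}}\Tr(C_{\mathfrak{I}})$, and $\omega^{n_{m}}\neq 1$ forces $\Tr(C_{\mathfrak{I}})=0$. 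Running over all $m$, the trace can be nonzero only when every $n_{j}=0$, i.e.\ when $\vert\mathfrak{I}\vert=0$, in which case $C_{\mathfrak{I}}=I$ and $\Tr(C_{\mathfrak{I}})=\dim\mathcal{H}=n^{L/2}$. (One can avoid the gauge unitaries altogether: conjugating $C_{\mathfrak{I}}$ by the generator $c_{m}$ and using \eqref{Reverse C} gives $\Tr(C_{\mathfrak{I}})=\omega^{-k_{m}}\Tr(C_{\mathfrak{I}})$ with $k_{m}$ an integer built from the partial sums of the exponents of $\mathfrak{I}$; imposing $k_{m}\equiv 0\bmod n$ for every $m$ and using that $L$ is even then forces all $n_{j}=0$.)

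Next I would pass from this to \eqref{eq:plan-ortho}. Since $C_{\mathfrak{I}}^{*}=\omega^{-\mathfrak{I}\circ\mathfrak{I}}\,C_{\mathfrak{I}^{c}}$ by \eqref{eq:Adjoint C} and monomials multiply to monomials up to a phase by \eqref{Reverse C}, the product $C_{\mathfrak{I}}^{*}C_{\mathfrak{I}'}$ is a unimodular scalar times the single monomial $C_{\mathfrak{I}^{c}+\mathfrak{I}'}$. Componentwise $\mathfrak{I}^{c}+\mathfrak{I}'\equiv 0\bmod n$ exactly when $\mathfrak{I}'=\mathfrak{I}$, in which case $C_{\mathfrak{I}}^{*}C_{\mathfrak{I}}=I$ by \eqref{eq:Adjoint C}; otherwise $C_{\mathfrak{I}^{c}+\mathfrak{I}'}$ has positive degree and its trace vanishes by the previous step. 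With $\Tr(I)=n^{L/2}$ this yields \eqref{eq:plan-ortho}.

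Finally \eqref{eq:plan-ortho} closes the argument. If $\sum_{\mathfrak{I}}a_{\mathfrak{I}}C_{\mathfrak{I}}=0$, then multiplying by $C_{\mathfrak{J}}^{*}$ and taking the trace gives $n^{L/2}a_{\mathfrak{J}}=0$, so the $n^{L}$ monomials $C_{\mathfrak{I}}$ are linearly independent; as the space of linear transformations on $\mathcal{H}$ has dimension $(n^{L/2})^{2}=n^{L}$, they form a basis. For any such transformation $A$, write $A=\sum_{\mathfrak{I}}a_{\mathfrak{I}}C_{\mathfrak{I}}$; then $\Tr(C_{\mathfrak{J}}^{*}A)=\sum_{\mathfrak{I}}a_{\mathfrak{I}}\Tr(C_{\mathfrak{J}}^{*}C_{\mathfrak{I}})=n^{L/2}a_{\mathfrak{J}}$, which is \eqref{eq:A_expansion}. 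The main obstacle --- really the only point carrying content --- is the vanishing of the trace of a nontrivial monomial; everything else is bookkeeping with \eqref{Reverse C}, \eqref{eq:Adjoint C}, and cyclicity of the trace.
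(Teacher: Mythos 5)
Your proposal is correct, and the final bookkeeping (orthogonality relation, linear independence, dimension count, expansion formula) matches the paper's. But your treatment of the one substantive step --- $\Tr(C_{\mathfrak{I}})=0$ for $\abs{\mathfrak{I}}>0$ --- is genuinely different. The paper works entirely inside the algebra: it cycles a single generator $c_{j}$ through the trace (inserting $c_{j}^{n}=I$ when $c_{j}$ is absent from the monomial) and then carries out a two-case combinatorial analysis, the harder case being when every generator appears, where the resulting congruences $n_{j}+n_{j+1}\equiv 0\bmod n$ together with $0<n_{j}<n$ and $L$ even produce a contradiction. You instead conjugate by the gauge unitary $V_{m}$ implementing $U_{m}$, getting $\Tr(C_{\mathfrak{I}})=\omega^{n_{m}}\Tr(C_{\mathfrak{I}})$ in one line for each $m$ with $n_{m}\neq 0$. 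Your route is shorter and more conceptual, but it imports an extra input: the inner implementability of the local gauge automorphisms, which rests either on the explicit number operators of \cite{CO} or on irreducibility of the representation (Morris) plus Skolem--Noether. That is legitimate --- the paper itself asserts $U_{j'}(c_{j})=V_{j'}c_{j}V_{j'}^{*}$ in \S II.3 --- though it is worth noting a mild tension: the proposition itself (the monomials span all of $\mathrm{End}(\mathcal{H})$) is essentially the irreducibility statement, so the paper's self-contained combinatorial argument re-derives that fact from the relations alone, while yours presupposes it. Your parenthetical alternative (conjugating by $c_{m}$ itself and showing the congruences $k_{m}\equiv 0\bmod n$ force all $n_{j}=0$) removes that dependence and is in substance the paper's Case II, carried out uniformly without the Case I/Case II split; if you spell out that computation you have a fully self-contained proof at least as clean as the published one.
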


	\begin{proof}
If $C_{\mathfrak{I}}=I$, then $\Tr\lrp{C_{\mathfrak{I}}}=\dim\mathcal{H}=n^{L/2}$.   So we need only analyze $\vert\mathfrak{I}\vert>0$.  We consider two cases.
\paragraph{Case I: A particular $c_{j}$ does not occur in $C_{\mathfrak{I}}$}

Distinguish between two subcases, according to whether or not $\sum_{i<j}n_{i}-\sum_{i>j}n_{i}=0$ mod $n$. If this quantity does not vanish, then cyclicity of the trace and the parafermion relations \eqref{eq:para1} ensure that  
	\begin{eqnarray*}
		\Tr\lrp{C_{\mathfrak{I}}}
		&=& \Tr\lrp{C_{\mathfrak{I}}c_{j}^{n}}
		= \Tr\lrp{ c_{j}C_{\mathfrak{I}}c_{j}^{n-1}}\nonumber \\
		&=& \omega^{\sum_{i<j}n_{i}-\sum_{i>j}n_{i}}\,\Tr\lrp{C_{\mathfrak{I}}}\;.
	\end{eqnarray*}
The last equality is a consequence of \eqref{eq:para1}, allowing one  to move $c_{j}$ to the right through $C_{\mathfrak{I}}$.  As $ \omega^{\sum_{i<j}n_{i}-\sum_{i>j}n_{i}}\neq1$, we infer that $\Tr(C_{\mathfrak{I}})=0$.  

On the other hand, when $\sum_{i<j}n_{i}-\sum_{i>j}n_{i}=0\mod n$, there exists  $j'\neq j$ with $n_{j'}\neq0 \mod n$, and also $\vert j-j'\vert$ is minimized.  If $j'<j$,  then
	\begin{eqnarray*}
		\Tr\lrp{C_{\mathfrak{I}}}
		&=& \Tr\lrp{C_{\mathfrak{I}}c_{j'}^{n}}
		= \Tr\lrp{ c_{j'}C_{\mathfrak{I}}c_{j'}^{n-1}}\nonumber \\
		&=& \omega^{- n_{j'}+(\sum_{i<j}n_{i}-\sum_{i>j}n_{i})}\,\Tr\lrp{C_{\mathfrak{I}}}\,\\
		&=&\omega^{- n_{j'}}\,\Tr\lrp{C_{\mathfrak{I}}}\,
		=0\;.
	\end{eqnarray*}
In the last equality we use that $\omega^{n_{j'}}\neq 1$.
If $j'>j$ the same reasoning can be followed, except  $\omega^{n_{j'}}$ replaces $\omega^{-n_{j'}}$.

\paragraph{Case II: Every $c_{j}$ occurs in $C_{\mathfrak{I}}$.}Here we have
	\be\label{C-0}
		n_{j} \in \{1, 2, \ldots,n-1\}\;,
	\ee  
for each $j$. Move  
one of the $c_{j}$'s cyclically through the trace, and back to its original position.  For $j=1$, this shows that 
	\be
		\Tr\lrp{C_{\mathfrak{I}}}
		= \omega^{\sum_{j=2}^{L }n_{j}} \Tr\lrp{C_{\mathfrak{I}}}\;.
	\ee
Hence either $\Tr\lrp{C_{\mathfrak{I}}}=0$, or else
	\be\label{C-1}
		\sum_{j=2}^{L }n_{j}  = 0\mod n\;.
	\ee
Likewise for $2\leqslant j \leqslant L$, either  $\Tr(C_{\mathfrak{I}})=0$, or 	\be\label{C-2}
		-\sum_{j=1}^{k-1}n_{j} +\sum_{j=k+1}^{L }n_{j}  = 0\mod n\;,
		\quad\text{for}\quad
		k=2, \ldots, L-1\;,
	\ee
and for $k=L$,
	\be\label{C-3}
		\sum_{j=1}^{L-1}n_{j}= 0\mod n\;.
	\ee

The conditions \eqref{C-1} and \eqref{C-2} for the case $k=2$, show that $n_{1}+n_{2}=0\mod n$. Condition \eqref{C-0} ensures that both $n_{1}$ and $n_{2}$ are strictly greater than $0$ and strictly less than $n$, so  $n_{1}+n_{2}=n$.  

Next subtract the condition \eqref{C-2} for $k=3$ from the same condition for $k=2$.   This shows that $n_{2}+n_{3}=0\mod n$, and the restriction \eqref{C-0} ensures that $n_{2}+n_{3}=n$.  Continue in this fashion for $k=j+1$ and $k=j$, in order to infer that $n_{j}+n_{j+1}=n$ for $j=3, \ldots, L-2$.  Finally consider the condition \eqref{C-3}.  As we have seen that $n_{j}+n_{j+1}=n$ for $j=1, 3, 5, \ldots, L-3$, we infer that  $n_{L-1}=0 \mod n$.  But this   is incompatible with $1<n_{L-1}<n$ required by  \eqref{C-0}.   So we conclude that  $\Tr(C_{\mathfrak{I}})=0$ in all cases for which $\mathfrak{I}\neq0$. 	

Note that $C_{\mathfrak{I}}^{*}C_{\mathfrak{I}}=I$ for each $\mathfrak{I}$.  Assuming that $\mathfrak{I}\neq\mathfrak{I}'$, it follows from the form \eqref{eq:Mbeta} for $C_{\mathfrak{I} }$, that $C_{\mathfrak{I}'}^{*}C_{\mathfrak{I}}=\pm C_{\gamma}$ for some $\gamma\neq0$.
Suppose that there are coefficients $a_{\mathfrak{I}}\in\mathbb{C}$ such that $\sum_{\mathfrak{I}} a_{\mathfrak{I}}C_{\mathfrak{I}}=0$.  Then for any $\mathfrak{I}'$, one has ${C_{\mathfrak{I}'}^{*} \sum_{\mathfrak{I}} a_{\mathfrak{I}}C_{\mathfrak{I}}}= \sum_{\mathfrak{I}} a_{\mathfrak{I}} {C_{\mathfrak{I}'}^{*}C_{\mathfrak{I}}}=0$. Taking the trace shows that $a_{\mathfrak{I}'}=0$, so the $C_{\mathfrak{I}}$ are actually linearly independent. As there are $n^{L}$ linearly independent matrices $C_{\mathfrak{I}}$, namely the square of the dimension of the representation space $n^{L/2}$ of parafermions,  these monomials are a basis set for all matrices.   Expanding an arbitrary matrix $A$ in this basis, we calculate the coefficients in \eqref{eq:A_expansion} using $\Tr I=n^{L/2}$.  \hfill $\square$
	\end{proof}

\setcounter{equation}{0}
\section{Primitive Reflection-Positivity}
\begin{prop}\label{prop:positivity_para}
Consider an operator $A\in\mathfrak{A}_{\pm}$, then
\be
{\Tr}(A\,\vartheta(A))\geqslant0\,.
\ee
\end{prop}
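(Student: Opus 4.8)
The plan is to reduce the claim to the trace--vanishing property of parafermion monomials established in Proposition~\ref{prop:monomials_para}.

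Take first $A\in\mathfrak{A}_{-}$; the case $A\in\mathfrak{A}_{+}$ is symmetric. By Proposition~\ref{prop:monomials_para} applied inside $\mathfrak{A}_{-}$, expand $A=\sum_{\mathfrak{I}\subset\Lambda_{-}}a_{\mathfrak{I}}\,C_{\mathfrak{I}}$ with $a_{\mathfrak{I}}=n^{-L/2}\Tr(C_{\mathfrak{I}}^{*}A)$; in particular the identity component is $a_{0}=n^{-L/2}\Tr(A)$. Since $\vartheta$ is implemented by an anti-unitary, conjugation by $\vartheta$ is anti-linear, so $\vartheta(A)=\sum_{\mathfrak{I}}\overline{a_{\mathfrak{I}}}\,\vartheta(C_{\mathfrak{I}})$, and by \eqref{Reflection C} each $\vartheta(C_{\mathfrak{I}})=\omega^{-\mathfrak{I}\circ\mathfrak{I}}\,C_{\vartheta\mathfrak{I}^{c}}$ is a phase times a single basis monomial lying in $\mathfrak{A}_{+}$, equal to $I$ exactly when $\mathfrak{I}=0$.

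Substituting and using linearity of the trace gives
\[
\Tr\lrp{A\,\vartheta(A)}=\sum_{\mathfrak{I},\mathfrak{I}'\subset\Lambda_{-}}a_{\mathfrak{I}}\,\overline{a_{\mathfrak{I}'}}\;\omega^{-\mathfrak{I}'\circ\mathfrak{I}'}\;\Tr\lrp{C_{\mathfrak{I}}\,C_{\vartheta\mathfrak{I}^{\prime c}}}\,.
\]
Now $C_{\mathfrak{I}}$ involves only the generators $c_{1},\ldots,c_{L/2}$ and $C_{\vartheta\mathfrak{I}^{\prime c}}$ only $c_{L/2+1},\ldots,c_{L}$, each in increasing order; hence by \eqref{Reverse C} the product $C_{\mathfrak{I}}\,C_{\vartheta\mathfrak{I}^{\prime c}}$ is a scalar multiple of a single basis monomial, carrying the exponents of $\mathfrak{I}$ on $\Lambda_{-}$ and those of $\vartheta\mathfrak{I}^{\prime c}$ on $\Lambda_{+}$. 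By Proposition~\ref{prop:monomials_para} its trace vanishes unless that monomial is $I$, and because the two supports are disjoint this forces $\mathfrak{I}=0$ and $\vartheta\mathfrak{I}^{\prime c}=0$, i.e. $\mathfrak{I}=\mathfrak{I}'=0$. Only this term survives, and it equals $\abs{a_{0}}^{2}\Tr(I)=n^{-L/2}\abs{\Tr A}^{2}\geqslant 0$. For $A\in\mathfrak{A}_{+}$ the argument is identical after interchanging $\Lambda_{+}$ and $\Lambda_{-}$, the only new step being to use the elementary rearrangement \eqref{eq:Reverse Theta C C} to move the $\Lambda_{-}$-block to the front, which contributes only a harmless overall phase. (Equivalently: for $A$ in either half-algebra $\Tr(A\,\vartheta(A))=n^{-L/2}\Tr(A)\,\Tr(\vartheta(A))$, and $\Tr(\vartheta(A))=\overline{\Tr(A)}$, so again $\Tr(A\,\vartheta(A))=n^{-L/2}\abs{\Tr A}^{2}\geqslant0$.)

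I do not expect a genuine obstacle here. Given Proposition~\ref{prop:monomials_para}, the whole content is that $\mathfrak{A}_{-}$ and $\mathfrak{A}_{+}$ are generated by disjoint, consecutively indexed blocks of parafermions, so that a product of an element of one with an element of the other is again a scalar times a single basis monomial, and the trace kills all such monomials except the identity component, which is manifestly non-negative. The points requiring care are only (i) that $\vartheta$ conjugates the coefficients $a_{\mathfrak{I}}$, and (ii) bookkeeping of the phases $\omega^{-\mathfrak{I}'\circ\mathfrak{I}'}$ and the block-interchange phase in the $\mathfrak{A}_{+}$ case --- all of which are trivial on the surviving term $\mathfrak{I}=\mathfrak{I}'=0$ and therefore cannot affect positivity.
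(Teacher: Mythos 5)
Your proposal is correct and follows essentially the same route as the paper: expand $A$ in the monomial basis of Proposition~\ref{prop:monomials_para}, observe that $C_{\mathfrak{I}}\in\mathfrak{A}_{-}$ and $\vartheta(C_{\mathfrak{I}'})\in\mathfrak{A}_{+}$ involve disjoint blocks of generators so their product is a phase times a single basis monomial, invoke the trace-vanishing property to kill every term except $\mathfrak{I}=\mathfrak{I}'=0$, and read off $\Tr(A\,\vartheta(A))=n^{L/2}\abs{a_{0}}^{2}\geqslant0$. The extra phase bookkeeping you carry along is harmless and the paper simply omits it.
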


\begin{proof}
The operator $A\in\mathfrak{A}_{\pm}$ can be expanded as a polynomial in the basis $C_{\mathfrak{I}}$ of Proposition \ref{prop:monomials_para}. One can restrict to $\mathfrak{I}\in\Lambda_{\pm}$, so the monomials that appear in the expansion all belong to $\mathfrak{A}_{\pm}$.   Write
\be
A=\sum_{\mathfrak{I}}a_{\mathfrak{I}}\,C_{\mathfrak{I}}\;,\qquad
\text{and}\quad
\vartheta(A)=\sum_{\mathfrak{I}} \overline{a_{\mathfrak{I}}}\,\vartheta(C_{\mathfrak{I}})\;.
\ee
With $A\in\mathfrak{A}_{-}$, one  can take $C_{ \mathfrak{I}}=c_{1}^{n_{1}}\cdots c_{L/2}^{n_{L/2}}$, so 
	\be
		\Tr\lrp{A\,\vartheta(A)}
		=
		\sum_{\mathfrak{I},\mathfrak{I}^{'}}
		a_{\mathfrak{I}} \,\overline {a_{\mathfrak{I}'}}\, \Tr\lrp{C_{\mathfrak{I}}\,\vartheta(C_{\mathfrak{I}'})}
	\,.
	\ee
Since $C_{\mathfrak{I}}\in\mathfrak{A}_{-}$ and $\vartheta(C_{\mathfrak{I}'})\in\mathfrak{A}_{+}$, they are products of different parafermions. We infer from Proposition \ref{prop:monomials_para} that the trace vanishes unless $\vert\mathfrak{I}\vert=\vert\vartheta\mathfrak{I}'\vert=0$. Then 
\be
\text{Tr}\left(A\,\vartheta(A)\right)
= n^{L/2}
\left\vert a_{0}\right\vert^2\geqslant0\,,
\ee
as claimed.    \hfill $\square$
\end{proof}

\setcounter{equation}{0}
\section{The Main Results}\label{Sect:Main}
Fix the order $n$ of parafermions, and consider positive-temperature states determined by a Hamiltonian $H$ that is reflection invariant $\vartheta(H)=H$, and globally gauge invariant $UHU^{*}=H$.  But $H$ is  not necessarily  hermitian.  

Assume that $H$ has the form 
\begin{equation}\label{eq:Hamiltonian_para}
H=H_{-}+H_{0}+H_{+}\,,
\end{equation}
with $H_{\pm}\in\mathfrak{A}_{\pm}^{n}$ and $H_{+}=\vartheta(H_{-})$.
Here  $H_{0}$ is a sum of interactions \eqref{Interaction Term-1} across the reflection plane, namely 
	\begin{equation}\label{eq:H0}
		H_{0}
		=\sum_{\mathfrak{I}\subset\Lambda_{-}\atop \vert \mathfrak{I}\vert >0}(-1)^{\vert\mathfrak{I}\vert+1}\omega^{\frac{1}{2}\vert\mathfrak{I}\vert^{2}}
		\,J_{\mathfrak{I}\,\vartheta\mathfrak{I}}\,C_{\mathfrak{I}}\,			\vartheta(C_{\mathfrak{I}})\,.
	\end{equation}

\subsection{Assumptions on the coupling constants\label{Sect:Coupling Constants}} For any $n$, our results hold if the coupling constants in \eqref{eq:H0} satisfy
	\beq	\label{eq:CouplingRestriction-2}
		\phantom{(-1)^{\abs{\mathfrak{I}}}}
		  J_{\mathfrak{I}\,\vartheta \mathfrak{I}}  &\geqslant&0\,,\quad\text{for all }\mathfrak{I}\,.
	\eeq
Alternatively, for even $n$, our results hold if the coupling constants satisfy  
	\beq	\label{eq:CouplingRestriction-1}
			(-1)^{\abs{\mathfrak{I}}} J_{\mathfrak{I}\,\vartheta \mathfrak{I}}&\geqslant&0\;,\quad\text{for all } \mathfrak{I}\;.
	\eeq
	
Note that we only restrict the signs of the coupling constants for those  interactions that cross the reflection plane.\footnote{ The conditions \eqref{eq:CouplingRestriction-2}--\eqref{eq:CouplingRestriction-1}, taken together with our definition \eqref{eq:H0} for the phase of the couplings, reduce to the conditions  in our earlier work on Majoranas \cite{JP}, for which $n=2$ and $\omega=-1$.  The phase in \eqref{eq:H0} is $i^{2\abs{\mathfrak{I}}+2+ \abs{\mathfrak{I}}^{2}}=-1,i$, corresponding to $\abs{\mathfrak{I}}$ being even or odd, respectively. In \cite{JP} the corresponding phases were $i^{(\abs{\mathfrak{I}} \mod 2)}=1, i$.  Thus the couplings $J_{\mathfrak{I} \,\vartheta\mathfrak{I}}$ in the present paper have the opposite sign from those in \cite{JP} for even  $\abs{\mathfrak{I}}$; they have the same sign  for odd  $\abs{\mathfrak{I}}$.  Bearing this in mind, the allowed interactions in the two papers agree for $n=2$. For the case of general $n$, our new choice of signs simplifies the formulation of  conditions \eqref{eq:CouplingRestriction-2}--\eqref{eq:CouplingRestriction-1}.}   
The functional 
\be\label{eq:rp functional para}
\Tr(A\,\vartheta(B)\,e^{-H})\,,
\qquad\text{for}\quad
A,B\in \mathfrak{A}_{\pm}^{n}\;,
\ee
that is linear in $A$ and anti-linear in $B$ defines a pre-inner product.  

\subsection{Reflection Positivity on the Algebra of Observables}
Here we show that a reflection symmetric, globally-gauge-invariant Hamiltonian  $H$ has the reflection-positivity property on the algebra $\mathfrak{A}_{\pm}^{n}$ of gauge-invariant observables. 

\begin{thm}\label{prop:reflection_positivity_para}
Let $A\in\mathfrak{A}_{\pm}^{n}$ and $H$ of the form \eqref{eq:Hamiltonian_para}--\eqref{eq:CouplingRestriction-1}.  Then the functional \eqref{eq:rp functional para} is positive on the diagonal,
\be\label{eq:rp}
\Tr(A\,\vartheta(A)\,e^{-H})=\Tr(\vartheta(A)\,A\,e^{-H})
\geqslant0\,-
\ee
In particular, the partition function  $\Tr(e^{-H})\geqslant0$ is real and non-negative.
\end{thm}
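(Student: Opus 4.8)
The plan is to follow the standard Osterwalder--Schrader strategy: expand the Boltzmann weight $e^{-H}$ as a convergent power series, split each power of $H$ into its three pieces $H_{-}$, $H_{0}$, $H_{+}$ using the multinomial theorem, and reduce the trace of each resulting monomial to a manifestly non-negative quantity by pairing the $H_{-}$-factors with their reflections. Concretely, I would first write
\be
e^{-H}=\sum_{N\geqslant 0}\frac{(-1)^{N}}{N!}\,(H_{-}+H_{0}+H_{+})^{N}
=\sum_{N\geqslant 0}\frac{(-1)^{N}}{N!}\sum \text{(ordered products of the three pieces)}\,.
\ee
Since $H_{-}\in\mathfrak{A}_{-}^{n}$ commutes with $H_{+}=\vartheta(H_{-})\in\mathfrak{A}_{+}^{n}$ (each is a gauge-invariant, hence even-degree, combination, and by the Elementary Rearrangement Lemma \ref{Lemma:Elementary Rearrangement} two observables on opposite sides of the plane commute because the phase $\omega^{-|\mathfrak I_{+}||\mathfrak I_{-}|}$ is trivial when both degrees are multiples of $n$), the crucial structural fact is that $\mathfrak A_{-}^{n}$, $\mathfrak A_{+}^{n}$ mutually commute. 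This lets me separate the $-$ and $+$ contributions freely; the genuine coupling between the two sides enters only through $H_{0}$.

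The key step is to expand $H_{0}$ itself. Each term of $H_{0}$ in \eqref{eq:H0} has the form $(-1)^{|\mathfrak I|+1}\omega^{\frac12|\mathfrak I|^{2}}J_{\mathfrak I\,\vartheta\mathfrak I}\,C_{\mathfrak I}\,\vartheta(C_{\mathfrak I})$ with $C_{\mathfrak I}\in\mathfrak A_{-}$ (not necessarily of degree divisible by $n$). Writing $e^{-H}$ and multiplying by $A\,\vartheta(A)$, I collect all $C_{\mathfrak I}$-type factors coming from $H_{0}$ together with the $A$ and $H_{-}$ factors on the ``$-$'' side, and all $\vartheta(C_{\mathfrak I})$, $\vartheta(A)$, $H_{+}=\vartheta(H_{-})$ factors on the ``$+$'' side. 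Using the commutation of the two sides and cyclicity of the trace, each monomial in the expansion factors as
\be
\Tr\bigl(A\,\vartheta(A)\,e^{-H}\bigr)
=\sum_{\{\mathfrak I_{1},\dots,\mathfrak I_{k}\}} \Bigl(\prod_{a} \lambda_{\mathfrak I_{a}}\Bigr)\,
\Tr\!\Bigl(B_{-}\,\vartheta(B_{-})\Bigr)\,,
\ee
where $B_{-}$ is a product of $A$, some $H_{-}$'s, and the $C_{\mathfrak I_{a}}$'s (an element of $\mathfrak A_{-}$), $\vartheta(B_{-})$ is its reflection, and $\lambda_{\mathfrak I_{a}}$ are the scalar coefficients extracted from $H_{0}$. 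By Proposition \ref{prop:positivity_para}, $\Tr(B_{-}\,\vartheta(B_{-}))\geqslant 0$ for every such $B_{-}$. So it remains only to check that the accumulated scalar prefactor $\prod_{a}\lambda_{\mathfrak I_{a}}$ (including the combinatorial $1/N!$, multinomial counts, and the $(-1)^{N}$) is non-negative. This is exactly where the sign assumptions \eqref{eq:CouplingRestriction-2}--\eqref{eq:CouplingRestriction-1} and the normalizing phase $(-1)^{|\mathfrak I|+1}\omega^{\frac12|\mathfrak I|^{2}}$ in \eqref{eq:H0} are designed to cooperate: by Proposition \ref{Prop:RS Phase Sigma} the combination $X_{\mathfrak I}=\omega^{\frac12|\mathfrak I|^{2}}C_{\mathfrak I}\vartheta(C_{\mathfrak I})$ is already reflection-invariant, and tracking how the phase $\omega^{\frac12|\mathfrak I_{a}|^{2}}$ recombines when several $C_{\mathfrak I_{a}}$ are multiplied together on the ``$-$'' side (each pairing $C_{\mathfrak I_{a}}\vartheta(C_{\mathfrak I_{a}})$ absorbing one factor) should leave exactly a product of $-J_{\mathfrak I_{a}\,\vartheta\mathfrak I_{a}}$ or $-(-1)^{|\mathfrak I_{a}|}J_{\mathfrak I_{a}\,\vartheta\mathfrak I_{a}}$; the global $(-1)^{N}$ from $e^{-H}$ cancels against the $k$ factors of $(-1)$ together with the contributions from the $H_{\pm}$ terms, and the sign hypotheses make what is left $\geqslant 0$ term by term. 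The case split between \eqref{eq:CouplingRestriction-2} (general $n$, all $J\geqslant 0$) and \eqref{eq:CouplingRestriction-1} (even $n$, $(-1)^{|\mathfrak I|}J\geqslant 0$) corresponds to whether or not the $\omega^{\frac12|\mathfrak I|^{2}}$-phases conspire to produce an extra $(-1)^{|\mathfrak I|}$ after the rearrangement.

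Finally, the stated equality $\Tr(A\,\vartheta(A)e^{-H})=\Tr(\vartheta(A)\,A\,e^{-H})$ follows from cyclicity of the trace together with $\vartheta(H)=H$ (so $e^{-H}$ is reflection-invariant and, being a function of $H$, one may cyclically rotate $A\vartheta(A)$ past it after an application of $\vartheta$), and the claim about the partition function is the special case $A=I$, giving $\Tr(e^{-H})\geqslant 0$; reality then also follows since $\overline{\Tr(e^{-H})}=\Tr(e^{-H^{*}})$ and $H^{*}$ is related to $H$ by the reflection structure. I expect the main obstacle to be purely bookkeeping: verifying that when an arbitrary number of $C_{\mathfrak I_{a}}$ factors from $H_{0}$, plus the $H_{-}$ factors, are commuted into a single ordered product $B_{-}$ on the ``$-$'' side, all the $\omega$-phases from \eqref{Reverse C} and from the normalization in \eqref{eq:H0} cancel against the phases picked up by the mirror product $\vartheta(B_{-})$ on the ``$+$'' side, leaving only the real couplings and the controlled signs. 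The commutation-of-the-two-sides fact and Propositions \ref{prop:positivity_para} and \ref{Prop:RS Phase Sigma} do all the conceptual work; the rest is a careful multinomial phase count.
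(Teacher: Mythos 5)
Your overall skeleton --- expand $e^{-H}$, use that $\mathfrak{A}_-^n$ and $\mathfrak{A}_+^n$ commute in order to push every factor coming from $H_0$ and $H_\pm$ into a single pairing $B_-\,\vartheta(B_-)$, and then invoke Proposition \ref{prop:positivity_para} --- is exactly the paper's strategy (the paper uses the Lie product formula \eqref{eq:LT_para} rather than the Taylor series, but that difference is cosmetic). However, the step you defer as ``purely bookkeeping'' is where the actual content of the theorem lies, and the mechanism you assert for it is wrong. When you commute the operators $\vartheta(C_{\mathfrak I_a})$ (which are \emph{not} gauge invariant, $\abs{\mathfrak I_a}$ being arbitrary) past the $C_{\mathfrak I_b}$ to assemble $B_-\vartheta(B_-)$, Lemma \ref{Lemma:Elementary Rearrangement} produces the phase $\omega^{\sum_{a<b}\abs{\mathfrak I_a}\abs{\mathfrak I_b}}$; combined with the normalization $\omega^{\frac12\sum_a\abs{\mathfrak I_a}^2}$ from \eqref{eq:H0} this gives $\omega^{\frac12\lrp{\sum_a\abs{\mathfrak I_a}}^2}$, which for a generic term is a nontrivial root of unity. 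These phases do \emph{not} ``cancel against the phases picked up by the mirror product'': term-by-term positivity would simply fail if all terms contributed. What saves the argument is a conservation law (the paper's Lemma \ref{lem:Counting_para}): because $A$ and the gauge-invariant insertions change total degree only by multiples of $n$, and because Proposition \ref{prop:monomials_para} kills the trace of any monomial of nonzero degree, one has $\Tr\lrp{B_-\,\vartheta(B_-)}=0$ unless $\sum_a\abs{\mathfrak I_a}=\alpha n$. Only on that subset is the accumulated phase $(-1)^{\alpha n}\,\omega^{\frac12\alpha^2n^2}=e^{\pi i n\alpha(\alpha+1)}=1$, and only on that subset does the even-$n$ hypothesis \eqref{eq:CouplingRestriction-1} give $\prod_a(-1)^{\abs{\mathfrak I_a}}=+1$, so that the product of couplings is nonnegative. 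Your proposal never identifies this selection rule, so as written it does not close.

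Two smaller points. Your explanation of the dichotomy between \eqref{eq:CouplingRestriction-2} and \eqref{eq:CouplingRestriction-1} is off: it has nothing to do with the $\omega^{\frac12\abs{\mathfrak I}^2}$ phases conspiring; it is precisely the observation that $(-1)^{\sum_a\abs{\mathfrak I_a}}=(-1)^{\alpha n}$ equals $+1$ automatically when $n$ is even, which lets one distribute that sign onto the individual couplings. And the identity $\Tr(A\,\vartheta(A)\,e^{-H})=\Tr(\vartheta(A)\,A\,e^{-H})$ needs no cyclicity or reflection argument: $A\in\mathfrak{A}_-^n$ and $\vartheta(A)\in\mathfrak{A}_+^n$ commute outright by \eqref{eq:Reverse Theta C C}, since both degrees are multiples of $n$.
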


\begin{proof} 
Use  the Lie product formula for matrices $\alpha_{1}$, $\alpha_{2}$, and $\alpha_{3}$ in the form 
\be\label{eq:LT_para}
 e^{\alpha_{1}+\alpha_{2}+\alpha_{3}}=\lim\limits_{k\rightarrow\infty}\left((1+\alpha_{1}/k)e^{\alpha_{2}/k}e^{\alpha_3/k}\right)^{k}\;,
\ee
with   $\alpha_{1}=-H_{0}$, $\alpha_{2}=-H_{-}$, and $\alpha_{3}=-H_{+}$.
(Such an approximation was also used in equation (2.6) of \cite{Froehlich-Lieb}.)   Using \eqref{eq:LT_para},   one has $e^{-H}=\lim_{k\to\infty}
\lrp{e^{-H}}_{k}$, 
where
	\beq\label{eq:Exp_minusH_k para}
		&&\hskip-.5cm \lrp{e^{-H}}_{k} \nn
		&& =
		\lrp{(I
		-\frac{1}{k}
		\,\sum_{\mathfrak{I}\subset\Lambda_{-}\atop \vert \mathfrak{I}\vert >0}(-1)^{1+\vert\mathfrak{I}\vert} \omega^{\frac{1}{2}\vert\mathfrak{I}\vert^{2}}{J_{\mathfrak{I}\,\vartheta \mathfrak{I}}}
		\,C_{\mathfrak{I}}\,\vartheta(C_{\mathfrak{I}}))
		\,e^{-H_{-}/k}\,e^{-\vartheta(H_{-})/k}}^{k}\nn
		&& =
		\lrp{(I
		+\frac{1}{k}\sum_{\mathfrak{I}\subset\Lambda_{-}\atop \vert \mathfrak{I}\vert >0}(-1)^{\vert\mathfrak{I}\vert} \omega^{\frac{1}{2}\vert\mathfrak{I}\vert^{2}}{J_{\mathfrak{I}\,\vartheta \mathfrak{I}}}
		\,C_{\mathfrak{I}}\,\vartheta(C_{\mathfrak{I}}))
		\,e^{-H_{-}/k}\,e^{-\vartheta(H_{-})/k}}^{k}\;.\nn
	\eeq
One can include the term $I$ in the sums in \eqref{eq:Exp_minusH_k para} by defining $J_{\varnothing\,\vartheta\varnothing}=k$,  
and including $\abs{\mathfrak{I}}=0$ in the sum.  Then
	\beq\label{eq:Exp_minusH_k-2}
		\lrp{e^{-H}}_{k}
		&=&
		\frac{1}{k^{k}}
		\lrp{\sum_{\mathfrak{I}\subset\Lambda_{-}}
		{(-1)^{\vert\mathfrak{I}\vert}\omega^{\frac{1}{2}\vert\mathfrak{I}\vert^{2}}J_{\mathfrak{I}\,\vartheta \mathfrak{I}}}
		\,C_{\mathfrak{I}}\,\vartheta(C_{\mathfrak{I}})
		\,e^{-H_{-}/k}\,e^{-\vartheta(H_{-})/k}}^{k}\nonumber\\
		&=& \sum_{\mathfrak{I}^{(1)},\ldots,\,\mathfrak{I}^{(k)}
		\subset\Lambda_{-}}
		(-1)^{\sum_{j=1}^{k}\vert\mathfrak{I}^{(j)}\vert}\omega^{\sum_{j=1}^{k}\frac{1}{2}\vert\mathfrak{I}^{(j)}\vert^{2}}\nn
		&& \hskip 1.2in \times
		\mathfrak{c}_{\mathfrak{I}^{(1)},\ldots,\,\mathfrak{I}^{(k)}}\,Y_{\mathfrak{I}^{(1)},\ldots,\,\mathfrak{I}^{(k)}}\;.
	\eeq
In the second equality we have expanded the expression into a linear combination of terms  with coefficients 
	\be\label{eq:product_couplings}
		\mathfrak{c}_{\mathfrak{I}^{(1)},\ldots,\,\mathfrak{I}^{(k)}}
		= \frac{ 1}{k^{k}} \prod_{j=1}^{k}J_{\mathfrak{I}^{(j)}\,			\vartheta\mathfrak{I}^{(j)}}\,,
	\ee 
and with 
	\beq\label{eq:Y_para}
		Y_{\mathfrak{I}^{(1)},\ldots,\,\mathfrak{I}^{(k)}}
		&=&
		C_{\mathfrak{I}^{(1)}}\vartheta(C_{\mathfrak{I}^{(1)}})\,e^{-H_{-}/k}
		\, e^{-\vartheta(H_{-})/k}\cdots \nn
		&& \qquad \times \cdots
		C_{\mathfrak{I}^{(k)}}\vartheta(C_{\mathfrak{I}^{(k)}})\,e^{-H_{-}/k}\,e^{-\vartheta(H_{-})/k}\;.
	\eeq 
We assume in \eqref{eq:Hamiltonian_para} that $H_{-}\in\mathfrak{A}_{-}^{n}$. Thus $Y_{\mathfrak{I}^{(1)},\ldots,\,\mathfrak{I}^{(k)}}$  has the form in \eqref{Phase Factors-2} with $B_{j}=e^{-H_{-}/k}$ for all $j$. 
Let 
	\be
	D_{\mathfrak{I}^{(1)},\ldots,\mathfrak{I}^{(k)}}= C_{\mathfrak{I}^{(1)}}\,e^{-H_{-}/k}\,C_{\mathfrak{I}^{(2)}}\,
	e^{-H_{-}/k} \cdots C_{\mathfrak{I}^{(k)}}
	\, e^{-H_{-}/k} \in \mathfrak{A}_{-}\;.
	\ee
	
\begin{lem}[General Rearrangement]\label{Prop:Phase Factors}
Let   $C_{\mathfrak{I}^{(j)}}\in \mathfrak{A}_{-}$, and let $A, B_{j}\in \mathfrak{A}_{-}^{n}$, for $j=1,\ldots,k$.  Then  
	\beq\label{Phase Factors-2}
		&& \hskip -.3in A\vartheta(A)C_{\mathfrak{I}^{(1)}}\vartheta(C_{\mathfrak{I}^{(1)}})\,
		B_{1}\vartheta(B_{1}) \,
		C_{\mathfrak{I}^{(2)}}\vartheta(C_{\mathfrak{I}^{(2)}})\,
		B_{2}\vartheta(B_{2})
		\cdots  C_{\mathfrak{I}^{(k)}} \vartheta(C_{\mathfrak{I}^{(k)}})\,B_{k}\vartheta(B_{k})\nn
		&& \quad = \omega^{\sum_{1\leqslant j<j' \leqslant k}\vert\mathfrak{I}^{(j)}\vert\,
			 \vert\mathfrak{I}^{(j')}\vert}\   AD_{\mathfrak{I}_{1},\ldots,\mathfrak{I}_{k}} \,\vartheta(AD_{\mathfrak{I}_{1},\ldots,\mathfrak{I}_{k}})\;,
	\eeq
where 
	$
		D_{\mathfrak{I}^{(1)},\ldots,\mathfrak{I}^{(k)}}
		= C_{\mathfrak{I}^{(1)}}B_{1}\,C_{\mathfrak{I}^{(2)}}B_{2}
		 \cdots C_{\mathfrak{I}^{(k)}}B_{k} \in \mathfrak{A}_{-}\;,
	$
and correspondingly, 
	$
		\vartheta(D_{\mathfrak{I}^{(1)},\ldots,\mathfrak{I}^{(k)}})
		= \vartheta(C_{\mathfrak{I}^{(1)}})\vartheta(B_{1})
			\cdots \vartheta(C_{\mathfrak{I}^{(k)}})\vartheta(B_{k})
		 \in \mathfrak{A}_{+}\;.
	$
\end{lem}

\begin{proof}
In order to establish \eqref{Phase Factors-2},  rearrange the order of the factors on the left side of the identity.  In doing this, one retains the relative order of  $A$, of the various $C_{\mathfrak{I}^{(j)}}$, and of the various $B_{j'}$ that are elements of $\mathfrak{A}_{-}$.   Likewise one retains the relative order of $\vartheta(A)$, of the various $\vartheta(C_{\mathfrak{I}^{(j)}})$ and of the various $\vartheta(B_{j'})$ that are elements of $\mathfrak{A}_{+}$.  In this manner one obtains  $AD_{\mathfrak{I}^{(1)},\ldots,\mathfrak{I}^{(k)}}\vartheta(AD_{\mathfrak{I}^{(1)},\ldots,\mathfrak{I}^{(k)}})$ multiplied by some phase.

The resulting rearrangement only requires that one commutes 
operators in $\mathfrak{A}_{+}$ with operators in $\mathfrak{A}_{-}$.  As $\vartheta(A)\in\mathfrak{A}_{+}^{n}$ and $\vartheta(B_{j'})\in\mathfrak{A}_{+}^{n}$, each such factor commutes with every operator in $\mathfrak{A}_{-}$, and in particular with each  $C_{\mathfrak{I}^{(j)}}$.   Likewise $B_{j'}\in\mathfrak{A}_{-}^{n}$ commutes with each operator   $\vartheta(C_{\mathfrak{I}^{(j)}})$.   Thus one acquires a phase not equal to $1$, only  by moving one of the operators $\vartheta(C_{\mathfrak{I}^{(j)}})\in \mathfrak{A}_{+}$ to the right, past one of the operators $C_{\mathfrak{I}^{(j')}}\in \mathfrak{A}_{-}$.  And this is only required in case $j<j'$.  Use the rearrangement identity \eqref{eq:Reverse Theta C C} to perform this exchange.  This phase is given by the resulting product of phases arising in the elementary moves, and it yields the phase in   \eqref{Phase Factors-2}.    \hfill $\square$
\end{proof}

\begin{lem}[\bf Conservation Law]\label{lem:Counting_para} 
Let $A$ and $D_{\mathfrak{I}_{1},\ldots,\mathfrak{I}_{k}}$ be as in Lemma \ref{Prop:Phase Factors}.  Then the 
 trace of $AD_{\mathfrak{I}_{1},\ldots,\mathfrak{I}_{k}}\,\vartheta(AD_{\mathfrak{I}_{1},\ldots,\mathfrak{I}_{k}})$ vanishes  unless 
	\be\label{eq:condition_sigma para-2}
		\sum_{j=1}^{k} \vert \mathfrak{I}^{(j)}\vert=0\mod n\;.
	\ee  
If \eqref{eq:condition_sigma para-2} holds, then 
 the constants $\mathfrak{c}_{\mathfrak{I}^{(1)},\ldots,\,\mathfrak{I}^{(k)}}$ defined in \eqref{eq:product_couplings} satisfy
 	\be
		0\leqslant \mathfrak{c}_{\mathfrak{I}^{(1)},\ldots,\,\mathfrak{I}^{(k)}}\;.
	\ee  
\end{lem}

\begin{proof} 
Expand  $T=AD_ {\mathfrak{I}_{1},\ldots,\mathfrak{I}_{k}}$ and its reflection as a sum of monomials  \eqref{eq:A_expansion}, 
	\be
		T
		= \sum_{\widetilde{\mathfrak{I}}\subset\Lambda_{-}} a_{\widetilde{\mathfrak{I}}} C_{\widetilde{\mathfrak{I}}}\;,
		\quad\text{and}\quad
		\vartheta(T) 
		= \sum_{\widetilde{\mathfrak{I}'}\subset\Lambda_{-}} \overline{a_{\widetilde{\mathfrak{I}'}} }\,
		 \vartheta(C_{\widetilde{\mathfrak{I}'}})\;.		
	\ee
Here we distinguish $\widetilde{\mathfrak{I}}=\{ \widetilde{n}_{1}, \ldots, \widetilde{n}_{L/2},0,\ldots,0\}$ from $\mathfrak{I}^{(j)}$ in the definition of $C_{\mathfrak{I}^{(j)}}$.  Proposition \ref{prop:monomials_para} ensures that the trace of $C_{\widetilde{\mathfrak{I}}}\,\vartheta(C_{\widetilde{\mathfrak{I}}'})$ vanishes unless each $\widetilde{n}_{i}=0=\widetilde{n}'_{i}$.   The trace of $T\vartheta(T)$ is given by the constant term in the expansion in the monomial basis of parafermions.

Consider first the case in which  $A$ and all the $B_{j}$ are constants.  Then the relation \eqref{Reverse C} ensures that 
	\be
		T= C_{\mathfrak{I}^{(1)}}\cdots C_{\mathfrak{I}^{(k)}}
		= \alpha \,C_{\mathfrak{I}^{(1)}+\cdots+\mathfrak{I}^{(k)}}
		= \alpha\, C_{\widetilde{\mathfrak{I}}}\;,
		\quad\text{with}\quad \alpha\in\C\;,
	\ee
namely there is only one term $C_{\widetilde{\mathfrak{I}}}$ in the expansion of $T$.   Thus we have the local conservation law 
 	\be\label{eq:Local Conservation}
		\widetilde{n}_{i}
		= \sum_{j=1}^{k} n_{i}^{(j)}  \mod n \;,
	\ee
for each $i=1,\ldots,L$, and in fact $\widetilde{n}_{i}=0$ for $i>L/2$.

Proposition \ref{prop:monomials_para} ensures that the trace of $T\vartheta(T)$ vanishes unless each parafermion $c_{i}$ appears in $C_{\widetilde{\mathfrak{I}}}$  with an exponent equal to  $0$~mod~$n$.  In  other words $\widetilde{n}_{i}=0$.   Summing this relation  over $i$ gives the desired global conservation law \eqref{eq:condition_sigma para-2}.
 
 In the general case, the matrices $A$ and $B_{j}$ are elements of $\mathfrak{A}_{-}^{n}$.  One obtains $T$ from the previous case by replacing each $C_{\mathfrak{I}^{(j)}}$ by the product $C_{\mathfrak{I}^{(j)}}\,B_{j}$, and multiplying $D_ {\mathfrak{I}_{1},\ldots,\mathfrak{I}_{k}}$ by $A$.  One can expand $A$ and each $B_{j}$ using the basis of parafermion monomials, and the total degree of each non-zero term in each of these expansions is an integer multiple of $n$.  In the general case, the multiplications may introduce new parafermion factors,  so it may be the case that $\widetilde{n}_{i}\neq \sum_{j=1}^{k} n_{i}^{(j)}\mod n$, and the local conservation law \eqref{eq:Local Conservation} may not hold for $T$.  However the relation  \eqref{Reverse C}  ensures that each multiplication by $A$ or by $B_{j}$ changes the {\em total} degree of any monomial in the expansion of  $T$  by an integer multiple of $n$.  Thus
 	\be
		\sum_{i=1}^{L}
		\widetilde{n}_{i}
		 = \sum_{i=1}^{L}
		 \sum_{j=1}^{k} n_{i}^{(j)}\mod n
		 = \sum_{j=1}^{k} \vert \mathfrak{I}^{(j)}\vert \mod n\;,
	\ee
remains true.  Since  the trace of $T\vartheta(T)$ vanishes unless $\widetilde{n}_{i}=0$ for all $i$, we infer the global conservation law \eqref{eq:condition_sigma para-2}.  Hence \eqref{eq:condition_sigma para-2}  holds in the general case.

The positivity of  $\mathfrak{c}_{\mathfrak{I}^{(1)},\ldots,\,\mathfrak{I}^{(k)}}$   follows in case each of the coupling constants $J_{\mathfrak{I}^{(j)}\,\vartheta\mathfrak{I}^{(j)}}$ are non-negative.  In case of even $n$, we also allow a factor 
	\be 
	(-1)^{\sum_{j=1}^{k} \abs{\mathfrak{I}^{(j)}}}=(-1)^{\alpha n}
	\ee
for integer $\alpha$.   But as we are assuming that $n$ is even, this  also equals $+1$.  \hfill $\square$
\end{proof}

\noindent {\em Completion of the Proof of Theorem \ref{prop:reflection_positivity_para}.} 
Using \eqref{eq:Exp_minusH_k-2} and Lemma \ref{Prop:Phase Factors}, we infer that  
	\beqs\label{eq:Expansion_para}
		&&  \hskip -0.6cm A\,\vartheta(A)\lrp{e^{-H}}_{k}
		\nn 
		&& =\sum_{ \mathfrak{I}^{(1)},\ldots,\,\mathfrak{I}^{(k)}}
		(-1)^{\sum_{j=1}^{k}\vert\mathfrak{I}^{(j)}\vert}\,\omega^{\sum_{j=1}^{k}\frac{1}{2}\vert\mathfrak{I}^{(j)}\vert^{2}
		+ \sum_{1\leqslant j<j' \leqslant k}  \vert\mathfrak{I}^{(j)} \vert \,\vert \mathfrak{I}^{(j')}\vert}\,\nn
		&& \hskip 1.4in \times \
		\mathfrak{c}_{\mathfrak{I}^{(1)},\ldots,\,\mathfrak{I}^{(k)}} 
		AD_{\mathfrak{I}^{(1)},\ldots,\mathfrak{I}^{(k)}}\vartheta(AD_{\mathfrak{I}^{(1)},\ldots,\mathfrak{I}^{(k)}})\nn
		&& = 
		 \sum_{\mathfrak{I}^{(1)},\ldots,\,\mathfrak{I}^{(k)}}
		(-1)^{\sum_{j=1}^{k}\vert\mathfrak{I}^{(j)}\vert}\,\omega^{\sum_{j=1}^{k}\frac{1}{2}\vert\mathfrak{I}^{(j)}\vert^{2}
		+\frac{1}{2} \lrp{\sum_{j=1}^{k} \vert \mathfrak{I}^{(j)}\vert}^{2}
		-\frac{1}{2} \sum_{j=1}^{k}  \vert\mathfrak{I}^{(j)} \vert^{2} }\,\nn
		&& \hskip 1.4in \times \
		\mathfrak{c}_{\mathfrak{I}^{(1)},\ldots,\,\mathfrak{I}^{(k)}} 
		AD_{\mathfrak{I}^{(1)},\ldots,\mathfrak{I}^{(k)}}\,\vartheta(AD_{\mathfrak{I}^{(1)},\ldots,\mathfrak{I}^{(k)}})\nn
		&& = 
		 \sum_{\mathfrak{I}^{(1)},\ldots,\,\mathfrak{I}^{(k)}}
		(-1)^{\sum_{j=1}^{k}\vert\mathfrak{I}^{(j)}\vert}\,
		\omega^{\frac{1}{2}\lrp{\sum_{j=1}^{k}\vert\mathfrak{I}^{(j)}\vert}^{2}}\nn
		&& \hskip 1.4in \times \ 
		\mathfrak{c}_{\mathfrak{I}^{(1)},\ldots,\,\mathfrak{I}^{(k)}} 
		AD_{\mathfrak{I}^{(1)},\ldots,\mathfrak{I}^{(k)}}\,\vartheta(AD_{\mathfrak{I}^{(1)},\ldots,\mathfrak{I}^{(k)}})\;.
\eeqs
Taking the trace,
we have the approximation 
	\beq\label{eq:Good Approximation}
		&&\hskip -1.8cm \Tr\lrp{A\vartheta(A)\lrp{e^{-H}}_{k}}\nn
		&& \hskip -.3in 
		=  \sum_{\mathfrak{I}^{(1)},\ldots,\,\mathfrak{I}^{(k)}}
		(-1)^{\sum_{j=1}^{k}
		\vert\mathfrak{I}^{(j)}\vert}\,\omega^{\frac{1}
		{2}\lrp{\sum_{j=1}^{k}\vert\mathfrak{I}^{(j)}\vert}^{2}}\, 
		 \mathfrak{c}_{\mathfrak{I}^{(1)},\ldots,\,\mathfrak{I}^{(k)}} 
		\nn
		&&\hskip .7in
		\times \Tr\lrp{AD_{\mathfrak{I}^{(1)},\ldots,\mathfrak{I}^{(k)}}\,\vartheta(AD_{\mathfrak{I}^{(1)},\ldots,\mathfrak{I}^{(k)}})}\;.
	\eeq
From Lemma \ref{lem:Counting_para} we infer that the trace vanishes unless $\sum_{j=1}^{k} \vert \mathfrak{I}^{(j)}\vert=\alpha n$ for some 
non-negative integer $\alpha$.  Also in this case $\mathfrak{c}_{\mathfrak{I}^{(1)},\ldots,\,\mathfrak{I}^{(k)}} \geqslant 0 $.  The phase in \eqref{eq:Good Approximation} is 
	\[
		(-1)^{\sum_{j=1}^{k}
		\vert\mathfrak{I}^{(j)}\vert}\omega^{\frac{1}
		{2}\lrp{\sum_{j=1}^{k}\vert\mathfrak{I}^{(j)}\vert}^{2}}
		= (-1)^{\alpha n} \omega^{\frac12 \alpha^{2}n^{2}}	
		=  e^{ 2\pi i n \frac{\lrp{1+\alpha}\alpha}{2}}
		=1\;.
	\]
In the final equality we use the fact that $(1+\alpha)\alpha$ is even.
Proposition \ref{prop:positivity_para} ensures $\Tr(AD_{\mathfrak{I}^{(1)},\ldots,\mathfrak{I}^{(k)}}\vartheta(AD_{\mathfrak{I}^{(1)},\ldots,\mathfrak{I}^{(k)}})\geqslant0$.  So  each term in the sum \eqref{eq:Good Approximation} is non-negative.  Therefore the $k\to\infty$ limit of  \eqref{eq:Good Approximation}  is also non-negative.   \hfill $\square$
\end{proof}

\setcounter{equation}{0}
\section{RP Does Not Hold on  $\mathfrak{A_{-}}$}
We have proved that the functional $f(A)={\rm Tr}(A\,\vartheta(A)\,e^{-H})$ is  positive for  $A\in \mathfrak{A}^{n}_{-}\subset\mathfrak{A}_{-}$. This is what we defined as the algebra of observables  after \eqref{eq:Subalgebras}.  Here we remark that $f(A)$ is {\em not} positive on the full algebra $\mathfrak{A}_{-}$.  

Consider $L=2$ with the parafermion generators, $c=c_{1}\in  \mathfrak{A}_{-}^{1}$ and $c_{2}=\vartheta(c)^{*}\in \mathfrak{A}_{+}^{1}$.  Let $A=c$ and take $H=H_{0}=\omega^{\frac12}c\vartheta(c)$, which has the form \eqref{eq:Hamiltonian_para}--\eqref{eq:H0}, with $H_{-}=H_{+}=0$.  We now show that $f(c)$ is not positive, so $\vartheta$ is not RP on $\mathfrak{A}_{-}^{1}$. In fact 
\beqs
f(c)
&=& \sum_{k=0}^{\infty} \Tr(c\vartheta(c)(c\vartheta(c))^{k})\frac{(-1)^{k}\omega^{\frac{k}{2}}}{k!} \nn
&=&\sum_{k=0}^{\infty} \omega^{  \lrp{k  + \frac{k(k-1)}{2}}}\Tr\lrp{c^{1+k}\vartheta(c)^{1+k}}\frac{(-1)^{k}\omega^{\frac{k}{2}}}{k!} \nn
&=& \sum_{k=0}^{\infty} 
	\frac{\omega^{ \lrp{k+\frac{k^{2}}{2}}}  (-1)^{k}}
	{k!}
	\Tr\lrp{c^{1+k}\vartheta(c)^{1+k}} \;.
\eeqs
Use the fact that the trace vanishes unless $1+k=\ell n$ for $\ell=1,2,\ldots$.   Define $\sum^{(1)}_{k}$ as the sum over the subset of $k\in \mathbb{Z}_{+}$ for which $k=\ell n-1$ for some $\ell=\ell(k) \in \mathbb{Z}_{+}$.  Then   
\beq\label{RP not so}
f(c)
&=&  {\sum_{k}}^{(1)}\ 
\frac{\omega^{ \lrp{(\ell n-1)  + \frac{(\ell n-1)^{2}}{2}}}  (-1)^{\ell n-1}}
	{k!} \, \Tr(I)
\nn
&=&  - \omega^{- \frac12 }  {\sum_{k}}^{(1)}\ 
\frac{ \omega^{ \frac12 \ell^{2}n^{2}}  (-1)^{\ell n}}{k!}\,  \Tr(I)\;.
\eeq
For integer $\ell$, the product  $\ell(\ell+1)$  is an even, positive integer.  Thus the phase inside the sum equals  
	\be
	\omega^{ \frac12 \ell^{2}n^{2}}  (-1)^{\ell n}
	= e^{\lrp{\frac{2\pi i}{n}} \lrp{ \frac{1}{2}\ell^{2} n^{2}}+\pi i \ell n} 
	= e^{\pi i n\ell\lrp{\ell + 1}} = 1\;.
	\ee 
Therefore one finds that  
	\be
		f(c)
		= \omega^{\frac{n-1}{2}}\, \sinh 1\,\Tr(I)\not \in \mathbb{R}_{+}\;.
	\ee

One can also calculate  $f(c^{j})$ for the same Hamiltonian, noting that $c^{j}\in\mathfrak{A}^{j}_{-}$.  In this case  there are certain pairs $(n,j)$, with $j<n$, for which $f(c^{j})$ is positive.  Three such families of pairs are: 
	\begin{enumerate}
	\item{}  $n=k^{3}$, $j=k^{2}$, with $k\in\mathbb{Z}_{+}$,
	\item{}  $n=2k^{2}$, $j=2kj'$, with $1\leqslant j'<k$,
	\item{}   $n=k^{2}$, $j=j'k$ with $k$ odd and $1\leqslant j'<k$.  
	\end{enumerate}
We do not pursue the question of finding on exactly which subalgebras of $\mathfrak{A}_{-}$  the functional $f(c^{j})$ is positive.

\setcounter{equation}{0}
\section{The Baxter Clock Hamiltonian\label{Sect:Baxter Clock}}
As an example of a familiar parafermion interaction, Fendley has shown that the Baxter clock Hamiltonian (originally formulated as  interacting spins \cite{Baxter-1,Baxter-2}) can be expressed in terms of parafermions.  Near the end of \S3.2 of \cite{Fendley}, he finds that for parafermion generators $c_{j}$ of degree $n$,    
	\be\label{Baxter Hamiltonian-1}
		H 
		= \omega^{\frac{n-1}{2}}\sum_{j=1}^{L-1} t_{j} \,c_{j+1} c_{j}^{*}\;,
	\ee
where the $t_{j}$ are  real coupling constants.  As $c_{j}^{*}=c_{j}^{n-1}$, each term in the Hamiltonian is an element of the algebra $\mathfrak{A}^{n}$. 

In  \S\ref{sec:Introduction}  we remarked that if $\{c_{j}\}$ and parafermion generators, then $\{c_{j}^{*}\}$ are also parafermion generators. So using this alternative set of parafermions, one can also write the Baxter clock Hamiltonian as 
	\be\label{Baxter Hamiltonian-1a}
		H
		= \omega^{\frac{n-1}{2}}
		\sum_{j=1}^{L-1} t_{j} \,c_{j+1}^{*} c_{j}
		= - \omega^{\frac{1}{2}} 
		\sum_{j=1}^{L-1} t_{j} \,c_{j}\,c_{j+1}^{*}\;. 
	\ee 
One can split this sum into three parts,  
	\beq\label{Baxter Hamiltonian-1}
		H 
		&=&H_{-}+H_{0}+H_{+}\;,
	\eeq	
where  
	\[
		H_{-} 
		=  -\omega^{\frac{1}{2}} \sum_{j=1}^{\frac12 L-1} t_{j} \, c_{j}
		c_{j+1}^{*}\;,
		\quad
		H_{+} 
		=  -\omega^{\frac{1}{2}} \sum_{j=\frac12 L+1}^{ L-1} t_{j} \, c_{j}
		c_{j+1}^{*}\;, 
	\]
and 	
	\be
		H_{0} 
		= -\omega^{\frac12} \,t_{\frac12 L}\;
		 c_{\frac 12 L}
		\,c_{\frac12 L +1}^{*}
		= -\omega^{\frac12} \,t_{\frac12 L}\;
		 c_{\frac 12 L}
		\,\vartheta(c_{\frac12 L})\;.
	\ee
Note that $\vartheta(H_{0})=H_{0}$. Also 
	\beqs
		\vartheta(H_{-}) 
		&=& -\omega^{-\frac12}\sum_{j=1}^{\frac12 L-1} t_{j} \, 
			\vartheta(c_{j})	\vartheta (c_{j+1}^{*})
		= -\omega^{-\frac12}\sum_{j=1}^{\frac12 L-1} t_{j} \, 
			c_{L-j+1}^{*}	c_{L-j} \\
		&=& -\omega^{-\frac12-(n-1)} \sum_{j=1}^{\frac12 L-1} t_{j} \, 
				c_{L-j} c_{L-j+1}^{*}
		=-\omega^{\frac12} \sum_{j=\frac12 L+1}^{ L-1} t_{L-j} \, 
				c_{j} c_{j+1}^{*}\;.
	\eeqs

On the other hand, the parafermion Hamiltonians that we study in \eqref{eq:Hamiltonian_para}  include those with $\abs{\mathfrak{I}}=1$ of the form 
	\be\label{Case of PFHamiltonian-1}
		H 
		= H_{-}+H_{0}+H_{+}\;,
		\quad\text{with}\quad
		H_{+}=\vartheta(H_{-})\;,
	\ee  
and
	\beq\label{Case of PFHamiltonian-2}
		H_{0} 
		&=&   \omega^{\frac{1}{2}} J_{\frac12 L} \  c_{\frac12 L}\,
		\vartheta (c_{\frac12 L}) 
		 = \omega^{\frac{1}{2}} 
		 J_{\frac12 L} \  c_{\frac12 L}\,
			c_{\frac12 L+1}^{*} 
		\;.
	\eeq	

Thus Fendley's representation of the Baxter Hamiltonian has the required general form \eqref{Case of PFHamiltonian-1}--\eqref{Case of PFHamiltonian-2} if $J_{j}=-t_{j}$ for all $j$, and also  
	\be
		t_{L-j} = t_{j}	\;,
		\quad\text{for}\quad
		j=1,2,\ldots, \frac12 L-1\;.
	\ee   
Such a Hamiltonian is reflection invariant, $\vartheta(H)=H$, and it is gauge invariant $UHU^{*}=H$.  
It satisfies our RP hypotheses in \S\ref{Sect:Coupling Constants}  in case: 
\beq\label{CC Restriction}
&&\text{For odd } n{:}  \ t_{\frac12 L}\leqslant 0\;.\nn
&&\text{For even } n{:} \  t_{\frac12 L} \in \mathbb{R}\;.
\eeq

With periodic boundary conditions, when one wishes to place the reflection plane arbitrarily, one needs to require for RP that all the Baxter--Fendley coupling constants $\{t_{j}\}$ are equal, in addition to  \eqref{CC Restriction}.

\setcounter{equation}{0}  
\section{Reflection Bounds}\label{sect:Reflection_Bounds}
Reflection positivity allows one to define a pre-inner product on $\mathfrak{A}_{\pm}$ given by
\be\label{eq:inner_product_1}
\lra{A,B}=\Tr(A\,\vartheta(B))\,.
\ee
This pre-inner product satisfies the Schwarz inequality
\be\label{eq:Schwarz}
\abs{\lra{A,B}}^2\leqslant \lra{A,A}\,\lra{B,B}\,.
\ee
In the standard way, one obtains an inner product $\lra{\widehat{A},\widehat{B}}$ and norm $\Vert\widehat{A}\Vert$ by defining the inner product on equivalence classes $\widehat{A}=\{A+n\}$ of $A$'s, modulo elements $n$ of the null space of the functional \eqref{eq:inner_product_1} on the diagonal.   In order to simplify notation, we ignore this distinction.

Let us introduce two pre-inner products $\lra{\ \cdot\,,\cdot \ }_{\pm}$ on the algebras $\mathfrak{A}_{\pm}^{n}$, corresponding to two reflection-symmetric Hamiltonians. Let 
\be
\lra{A,B}_{-}=\Tr(A\,\vartheta(B)\,e^{-H_{-,\vartheta-}}),\quad\text{for}\quad H_{-,\vartheta -}=H_{-}+H_{0}+\vartheta(H_{-})\,.
\ee
Similarly define
\be
\lra{A,B}_{+}=\Tr(A\,\vartheta(B)\,e^{-H_{\vartheta+,+}})\,,\quad\text{for}\quad H_{\vartheta+,+}=\vartheta(H_{+})+H_{0}+H_{+}\,.
\ee
As previously, one can define inner products  and norms $\Vert \ \cdot\  \Vert_{\pm}$.

\begin{prop}[\bf RP-Bounds]\label{Prop:RPBound}
Let $H=H_{-}+H_{0}+H_{+}$ with $H_{\pm}\in\mathfrak{A}_{\pm}^{n}$ and $H_{0}$ of the form \eqref{eq:H0}. Then for $A,B\in\mathfrak{A}_{+}^{n}$,
\be\label{eq:bound_1}
\abs{\Tr(A\,\vartheta(B)\,e^{-H})}\leqslant \Vert A\Vert_{-}\,\Vert B\Vert_{+}\,.
\ee
Also
\be\label{eq:bound_2}
\abs{\Tr(A\,\vartheta(B)\,e^{-H})}\leqslant \Vert A\Vert_{+}\,\Vert B\Vert_{-}\,.
\ee
In particular for $A=B=I$, 
\be
\left\vert\Tr(e^{-H})\right\vert\leqslant \Tr(e^{-(H_{-}+H_{0}+\vartheta(H_{-}))})^{1/2}\,\Tr(e^{-(\vartheta(H_{+})+H_{0}+H_{+})})^{1/2}\,.
\ee
\end{prop}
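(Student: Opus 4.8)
The plan is to derive both bounds from the Schwarz inequality of Proposition~\ref{prop:positivity_para}, by reusing the Lie--Trotter expansion already set up for Theorem~\ref{prop:reflection_positivity_para}. The guiding idea is that the three functionals $\Tr(A\,\vartheta(B)\,e^{-H})$, $\lra{B,B}_{-}=\Tr(B\,\vartheta(B)\,e^{-H_{-,\vartheta-}})$ and $\lra{A,A}_{+}=\Tr(A\,\vartheta(A)\,e^{-H_{\vartheta+,+}})$ all expand into sums over the same index sets $\vec{\mathfrak I}=(\mathfrak I^{(1)},\dots,\mathfrak I^{(k)})$ with the same non-negative weights $\mathfrak c_{\vec{\mathfrak I}}=k^{-k}\prod_{j}J_{\mathfrak I^{(j)}\vartheta\mathfrak I^{(j)}}$, and that term by term the first is dominated by the geometric mean of the other two. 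First I would apply \eqref{eq:LT_para} with $\alpha_{1}=-H_{0}$, $\alpha_{2}=-H_{-}$, $\alpha_{3}=-H_{+}$ to write $e^{-H}=\lim_{k}\lrp{(I-\tfrac1k H_{0})e^{-H_{-}/k}e^{-H_{+}/k}}^{k}$, substitute into $\Tr(A\,\vartheta(B)\,e^{-H})$, expand $H_{0}$ into its monomials proportional to $C_{\mathfrak I}\vartheta(C_{\mathfrak I})$, and rearrange each resulting product exactly as in the General Rearrangement lemma: since every element of $\mathfrak A_{\pm}^{n}$ commutes with all of $\mathfrak A_{\mp}$, one pushes all $\mathfrak A_{-}$-factors to the left and all $\mathfrak A_{+}$-factors to the right, picking up only the phase $\omega^{\sum_{j<j'}\abs{\mathfrak I^{(j)}}\,\abs{\mathfrak I^{(j')}}}$. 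This gives
\[
\Tr(A\,\vartheta(B)\,\lrp{e^{-H}}_{k})=\sum_{\vec{\mathfrak I}}\mathfrak c_{\vec{\mathfrak I}}\;\Tr\!\big((\vartheta(B)\,D^{-}_{\vec{\mathfrak I}})\;\vartheta(\vartheta(A)\,D^{+}_{\vec{\mathfrak I}})\big),
\]
with $D^{-}_{\vec{\mathfrak I}}=C_{\mathfrak I^{(1)}}e^{-H_{-}/k}\cdots C_{\mathfrak I^{(k)}}e^{-H_{-}/k}$ and $D^{+}_{\vec{\mathfrak I}}=C_{\mathfrak I^{(1)}}e^{-\vartheta(H_{+})/k}\cdots C_{\mathfrak I^{(k)}}e^{-\vartheta(H_{+})/k}$, both in $\mathfrak A_{-}$; here I would invoke the Conservation Law (Lemma~\ref{lem:Counting_para}) and the phase collapse $(-1)^{\alpha n}\omega^{\frac12\alpha^{2}n^{2}}=1$ to see that on the surviving configurations ($\sum_{j}\abs{\mathfrak I^{(j)}}=\alpha n$) the residual phase is $1$ and $\mathfrak c_{\vec{\mathfrak I}}\geqslant0$, just as in the proof of Theorem~\ref{prop:reflection_positivity_para}.

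For the estimate itself: because $\vartheta(B)D^{-}_{\vec{\mathfrak I}}$ and $\vartheta(A)D^{+}_{\vec{\mathfrak I}}$ both lie in $\mathfrak A_{-}$, the Schwarz inequality \eqref{eq:Schwarz} for the form $\Tr(X\vartheta(Y))$ on $\mathfrak A_{-}$ (which is Proposition~\ref{prop:positivity_para} plus polarization) bounds each summand, and Cauchy--Schwarz over $\vec{\mathfrak I}$ with the non-negative weights $\mathfrak c_{\vec{\mathfrak I}}$ gives
\[
\abs{\Tr(A\,\vartheta(B)\,\lrp{e^{-H}}_{k})}^{2}\leqslant\Big(\sum_{\vec{\mathfrak I}}\mathfrak c_{\vec{\mathfrak I}}\Tr\big((\vartheta(B)D^{-}_{\vec{\mathfrak I}})\vartheta(\vartheta(B)D^{-}_{\vec{\mathfrak I}})\big)\Big)\Big(\sum_{\vec{\mathfrak I}}\mathfrak c_{\vec{\mathfrak I}}\Tr\big((\vartheta(A)D^{+}_{\vec{\mathfrak I}})\vartheta(\vartheta(A)D^{+}_{\vec{\mathfrak I}})\big)\Big).
\]
The last point is to recognize the two factors: using $\Tr(XD\,\vartheta(XD))=\Tr((\vartheta(X)D)\,\vartheta(\vartheta(X)D))$, that $B$ and $\vartheta(B)$ commute (so the case $B\in\mathfrak A_{+}^{n}$ reduces to the $\mathfrak A_{-}^{n}$-instance treated in Theorem~\ref{prop:reflection_positivity_para}), and that $H_{-,\vartheta-}$ has $\mathfrak A_{-}$-part $H_{-}$ while $H_{\vartheta+,+}$ has $\mathfrak A_{-}$-part $\vartheta(H_{+})$ — exactly what produces $D^{-}_{\vec{\mathfrak I}}$, resp.\ $D^{+}_{\vec{\mathfrak I}}$ — these two sums are precisely the analogous Lie--Trotter expansions of $\Tr(B\,\vartheta(B)\,\lrp{e^{-H_{-,\vartheta-}}}_{k})$ and $\Tr(A\,\vartheta(A)\,\lrp{e^{-H_{\vartheta+,+}}}_{k})$. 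Letting $k\to\infty$ then yields $\abs{\Tr(A\,\vartheta(B)\,e^{-H})}^{2}\leqslant\lra{B,B}_{-}\lra{A,A}_{+}=\Vert B\Vert_{-}^{2}\Vert A\Vert_{+}^{2}$, which is \eqref{eq:bound_2}. Bound \eqref{eq:bound_1} follows by the symmetric version of this argument, with the roles of the two halves $\Lambda_{-},\Lambda_{+}$ (hence of $\Vert\cdot\Vert_{-}$ and $\Vert\cdot\Vert_{+}$) interchanged; and setting $A=B=I$ in either inequality, with $\Vert I\Vert_{-}^{2}=\Tr(e^{-(H_{-}+H_{0}+\vartheta(H_{-}))})$ and $\Vert I\Vert_{+}^{2}=\Tr(e^{-(\vartheta(H_{+})+H_{0}+H_{+})})$, gives the stated bound on $\abs{\Tr(e^{-H})}$.

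I expect the main obstacle to lie in the bookkeeping of the rearrangement step, not in the inequality: one must carefully combine the phases from expanding $H_{0}$ (the factors $(-1)^{1+\abs{\mathfrak I^{(j)}}}\omega^{\frac12\abs{\mathfrak I^{(j)}}^{2}}$ together with the sign hypotheses \eqref{eq:CouplingRestriction-2}--\eqref{eq:CouplingRestriction-1} on the couplings) with the commutation phase $\omega^{\sum_{j<j'}\abs{\mathfrak I^{(j)}}\abs{\mathfrak I^{(j')}}}$, and verify that on the trace-nonvanishing configurations $\sum_{j}\abs{\mathfrak I^{(j)}}=\alpha n$ everything cancels and $\mathfrak c_{\vec{\mathfrak I}}\geqslant0$ (for hypothesis \eqref{eq:CouplingRestriction-1} this uses $n$ even, so that $(-1)^{\alpha n}=1$). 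A secondary but essential point is that Proposition~\ref{prop:positivity_para} applies only to $\Tr(X\,\vartheta(X))$ with $X\in\mathfrak A_{\pm}$, so one must systematically write $A\,\vartheta(D^{+}_{\vec{\mathfrak I}})=\vartheta(\vartheta(A)\,D^{+}_{\vec{\mathfrak I}})$, and likewise elsewhere, keeping the operators inside $\mathfrak A_{-}$ rather than in the mixed product $A\,D^{+}_{\vec{\mathfrak I}}$.
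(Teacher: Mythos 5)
Your proposal is correct and takes the route the paper intends: the paper's entire proof of Proposition~\ref{Prop:RPBound} is the single sentence that it ``follows the proof of Theorem~\ref{prop:reflection_positivity_para},'' and your write-up is precisely the standard way to carry that out (Lie-product expansion, the General Rearrangement and Conservation Law lemmas to reduce each term to $\mathfrak c_{\mathfrak I^{(1)},\ldots,\mathfrak I^{(k)}}\Tr(X\,\vartheta(Y))$ with non-negative weights and trivial phase, termwise Schwarz followed by Cauchy--Schwarz over the index sets, and identification of the two resulting sums with the expansions of $\Vert B\Vert_-^2$ and $\Vert A\Vert_+^2$). The only detail worth flagging is that passing from \eqref{eq:bound_2} to \eqref{eq:bound_1} ``by symmetry'' deserves one line (e.g.\ applying $\vartheta$ to the whole trace and using $\Tr(\vartheta(M))=\overline{\Tr(M)}$), but this is a minor bookkeeping point that the paper also leaves implicit.
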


\begin{proof}
The proof of \eqref{eq:bound_1} follows the proof of Theorem \ref{prop:reflection_positivity_para}.     \hfill $\square$
\end{proof}

\setcounter{equation}{0}
\section{Topological Order and Reflection Positivity}
In this section we impose periodic boundary conditions: allow the location label $i$ of the parafermion $c_{i}$ to take arbitrary integer values, and identifying the parafermion $c_{i}$ with $c_{j}$ when $i=j\mod L$.  Let  $W_{A}=A\,\vartheta(A)=\mathfrak{B}(C)$, be a loop of parafermions of length $2\ell$.  This means that $\mathfrak{B}(C)$ is a product of parafermion generators $O_{i}=c_{i}$,  
\begin{equation}\label{eq:WOi}
\mathfrak{B}(C)=O_{i_{1}}O_{i_{2}}\cdots O_{i_{2\ell}}\,,
\quad\text{where}\quad
i_{1}\leqslant i_{2} \leqslant \cdots \leqslant i_{2\ell}=i_{1}\;.
\end{equation}
(This choice is the most general, as $c_{i}^{n_{i}}$ is the product of several $c_{i}$'s.)  
Take $A=c_{i_{1}}\cdots c_{i_{\ell}}$ to be the product of parafermions  along half of a loop and $\vartheta(A)=c_{\vartheta i_{1}}^{*} \cdots c_{\vartheta i_{\ell}}^{*}=c_{2\ell}^{-1}\cdots c_{\ell+1}^{-1}$ the product of operators along the other half of the loop.

Consider a reflection-invariant Hamiltonian $H$, with a ground-state subspace $\mathcal{P}$.  Define $H$ to have $W$-order, if the operator $W$ applied to any vector $\Omega\in\mathcal{P}$ has no component in $\mathcal{P}$ that is orthogonal to $\Omega$.  In other words, $\mathcal{P}W\mathcal{P}$ is a scalar multiple of $\mathcal{P}$, and $W$ does not cause transitions between different ground states. Topological order involves the additional assumption that $W$ is localized.  

In an earlier paper \cite{TORP}, we have the following result for a Hamiltonian describing the interaction of Majoranas.   A similar argument shows that it applies as well to Hamiltonians describing the interaction of parafermions.  
\begin{prop}
Let $H$ be a reflection-positive Hamiltonian that has $W_{A}=A\vartheta(A)$  topological order, where $A\in\mathfrak{A}_{-}^{n}$. Then $0\leqslant\lra{\Omega, W_{A}\Omega}$ for any $\Omega\in\mathcal{P}$. 
\end{prop}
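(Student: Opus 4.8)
The plan is to deduce the statement from Theorem~\ref{prop:reflection_positivity_para} together with the defining property of $W_{A}$-order, by passing to a zero-temperature limit. First I would use $W_{A}$-order to reduce to a scalar inequality. By hypothesis $\mathcal{P}W_{A}\mathcal{P}=\lambda\,\mathcal{P}$ for some $\lambda\in\mathbb{C}$, where $\mathcal{P}$ is the orthogonal projection onto the ground-state subspace, so $\mathcal{P}^{*}=\mathcal{P}=\mathcal{P}^{2}$. For $\Omega\in\mathcal{P}$ one has $\mathcal{P}\Omega=\Omega$, hence
\[
	\lra{\Omega,W_{A}\Omega}=\lra{\mathcal{P}\Omega,W_{A}\mathcal{P}\Omega}=\lra{\Omega,\mathcal{P}W_{A}\mathcal{P}\Omega}=\lambda\,\norm{\Omega}^{2}\,.
\]
Thus it suffices to prove $\lambda\geqslant0$, equivalently $\Tr(\mathcal{P}W_{A}\mathcal{P})=\lambda\dim\mathcal{P}\geqslant0$.

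Next I would introduce an inverse temperature $\beta>0$ and apply the main theorem to $\beta H$. Since $\beta H$ again has the form \eqref{eq:Hamiltonian_para}, with $\beta H_{\pm}\in\mathfrak{A}_{\pm}^{n}$ and rescaled couplings $\beta J_{\mathfrak{I}\,\vartheta\mathfrak{I}}$ satisfying the same sign conditions \eqref{eq:CouplingRestriction-2}--\eqref{eq:CouplingRestriction-1} because $\beta>0$, Theorem~\ref{prop:reflection_positivity_para} applied with the observable $A\in\mathfrak{A}_{-}^{n}$ gives $\Tr(W_{A}\,e^{-\beta H})=\Tr(A\,\vartheta(A)\,e^{-\beta H})\geqslant0$ for every $\beta>0$, and also $\Tr(e^{-\beta H})>0$. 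Dividing by the partition function and letting $\beta\to\infty$, the normalized Boltzmann operator tends to the ground-state projection: for hermitian $H$ with ground-state energy $E_{0}$ one has $e^{-\beta(H-E_{0})}\to\mathcal{P}$ and $\Tr\,e^{-\beta(H-E_{0})}\to\dim\mathcal{P}$, so by cyclicity of the trace
\[
	0\leqslant\lim_{\beta\to\infty}\frac{\Tr(W_{A}\,e^{-\beta H})}{\Tr(e^{-\beta H})}=\frac{\Tr(W_{A}\mathcal{P})}{\dim\mathcal{P}}=\frac{\Tr(\mathcal{P}W_{A}\mathcal{P})}{\dim\mathcal{P}}=\lambda\,.
\]
Hence $\lambda$ is real and non-negative, and the proposition follows from the displayed identity of the first paragraph.

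I expect the one nontrivial point to be the convergence $e^{-\beta H}/\Tr(e^{-\beta H})\to\mathcal{P}/\dim\mathcal{P}$ as $\beta\to\infty$. For hermitian $H$ this is immediate from the spectral theorem. For the reflection-symmetric but possibly non-hermitian Hamiltonians admitted here, one must instead show that the dominant component of $e^{-\beta H}$ as $\beta\to\infty$ is a positive multiple of the reflection-invariant ground-state projection $\mathcal{P}$; this is exactly the analysis carried out for Majoranas in \cite{TORP}, and it transfers to the parafermion setting without change. Everything else -- the reduction via $W_{A}$-order, the scaling $H\mapsto\beta H$, and the trace identities -- is formal.
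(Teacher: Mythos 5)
Your argument is correct and is essentially the proof the paper has in mind: the paper does not spell it out, but simply appeals to \cite{TORP}, where exactly this combination of reflection positivity of $\Tr(A\,\vartheta(A)\,e^{-\beta H})$ for all $\beta>0$ (the sign conditions on the couplings being preserved under the rescaling $H\mapsto\beta H$), the $\beta\to\infty$ limit onto the normalized ground-state projection, and the $W_{A}$-order reduction $\mathcal{P}W_{A}\mathcal{P}=\lambda\,\mathcal{P}$ is used. The one point deserving care, which you correctly flag, is the zero-temperature limit when $H$ is not hermitian; for hermitian $H$ your spectral-theorem argument closes the proof, matching the cited treatment.
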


\setcounter{equation}{0}  \section{Acknowledgement}
We warmly thank Daniel Loss for suggesting that we learn about parafermions, and also for his hospitality at the University of Basel.  We acknowledge a useful presentation by Jelena Klinovaja. We are grateful to Maissam Barkeshli, Jacques Perk, and Thomas Quella for comments on an earlier version of this manuscript, including Perk's advising us of Sylvester's paper \cite{Sylvester-2}, and of an electronic archive where one can download \cite{Yamazaki}. We thank Barbara Drauschke for suggesting several improvements in style.

This work was supported by the Swiss NSF, NCCR QSIT, NCCR Nanoscience, and the Alexander von Humboldt Foundation.

\end{document}